\title{Speed-robust scheduling revisited}
\author{Josef Mina{\v{r}}{\'\i}k}
{Computer Science Institute of Charles Univ.,
Faculty of Mathematics and Physics,
Prague, Czechia}
{minarjos00@gmail.com}{}
{}
\author{Ji{\v{r}}{\'\i} Sgall}
{Computer Science Institute of Charles Univ.,
Faculty of Mathematics and Physics,
Prague, Czechia}
{sgall@iuuk.mff.cuni.cz}
{https://orcid.org/0000-0003-3658-4848}
{}
\authorrunning{J. Mina{\v{r}}{\'\i}k and J. Sgall} 
\keywords{scheduling, approximation algorithms, makespan, uniform speeds} 
\pgfplotsset{width=8cm,compat=1.9}
\DeclareMathOperator{\cost}{cost}
\DeclareMathOperator{\size}{size}
\numberwithin{equation}{section}
\begin{document}

\maketitle

\begin{abstract}
Speed-robust scheduling is the following two-stage problem of scheduling $n$ jobs on $m$ uniformly related machines.
In the first stage, the algorithm receives the value of $m$ and the processing times of $n$ jobs; it has to partition the jobs into $b$ groups called bags. 
In the second stage, the machine speeds are revealed and the bags are assigned to the machines, i.e., the algorithm produces a schedule where all the jobs in the same bag are assigned to the same machine.
The objective is to minimize the makespan (the length of the schedule). 
The algorithm is compared to the optimal schedule and it is called $\rho$-robust, if its makespan is always at most $\rho$ times the optimal one.

Our main result is an improved bound for equal-size jobs for $b=m$. We give an upper bound of $1.6$. This improves previous bound of $1.8$ and it is almost tight in the light of previous lower bound of $1.58$. 
Second, for infinitesimally small jobs, we give tight upper and lower bounds for the
case when $b\geq m$. This generalizes and simplifies the previous bounds for $b=m$.
Finally, we introduce a new special case with relatively small jobs for which we give an algorithm whose robustness is close to that of infinitesimal jobs and thus gives better than $2$-robust for a large class of inputs. 
\end{abstract}

\section{Introduction}

Speed-robust scheduling is a two-stage problem that was introduced by Eberle \textit{et al.}~\cite{eberle}. The eventual goal is to schedule on $m$ uniformly related machines, however their speeds are not known at the beginning.
In the first stage, the algorithm receives the value of $m$ and the processing times of $n$ jobs; it has to partition the jobs into $b$ groups called bags. 
In the second stage, the machine speeds are revealed and the bags are assigned to the machines, i.e., the algorithm produces a schedule where all the jobs in the same bag are assigned to the same machine.
The objective is to minimize the makespan (the length of the schedule). 
The algorithm is compared to the optimal schedule of the jobs on the
machines with known speeds;
it is called $\rho$-robust,
if its makespan is always at most $\rho$ times the optimal one.

This problem is motivated by situations like the following one. 
Suppose that you have $n$ computational tasks that you want to solve.
You have a computational cluster available, but with unknown parameters.
You only know that there will be (at most) $m$ machines available on the cluster.
You do not know anything about the performance of the machines---some of the machines might be faster than others; you only know that there will be (at most) $m$ machines available on the cluster.
Furthermore, you can submit at most $b$ different tasks to the cluster.
Hence you will have to partition your $n$ tasks into at most $b$ groups.
One such group will then have to be executed on one machine.
The cluster will then schedule the groups optimally, knowing the speeds of the machines, and minimize the makespan.

Studying uncertainty in scheduling has a long history. In the
classical online scheduling~\cite{online-survey}, the machine
environment is usually fixed and the uncertainty stems from job
arrivals.  Considering uncertainty in the machine environment is less
frequent.  One early example is the work of Csan\'ad and
Noga~\cite{Noga-speeds}, where additional machines can be bought for a
certain cost.  A substantial body of research with changing machine
speeds is the area of dynamic speed scaling, in particular in the
context of minimizing the power consumption,
see~\cite{albers-survey,pruhs-survey}; however, note that here the
changing speeds are not a part of the adversarial environment but used
by the algorithm to its advantage.  Another direction considers online
scheduling with unavailability
periods~\cite{unavailability-survey}. One-machine scheduling with
adversarially changing machine speed was considered
in~\cite{epstein-speeds} in the context of unreliable machines.

Completely reversing the scenario with all jobs known from the
beginning but uncertain machine environment is a recent new model
introduced by Stein and Zhong~\cite{zero_one} and Eberle {\it et
  al.}~\cite{eberle}, see Section~\ref{sec-related}.


\subsection{Formal definitions}

Formally, in the first stage, we receive three positive integers $n, m, b$  and $n$ non-negative real numbers $p_1, \ldots, p_n$ representing the processing times of $n$ jobs. The total processing time is denoted $P=\sum_{j=1}^m p_j$.
The output of our first-stage algorithm is a mapping $B: \{1, \ldots, n\} \to \{1, \ldots, b\}$, where $B(j) = i$ represents the fact that the job $j$ was assigned to the bag $i$.
The sum of the processing times of all the jobs assigned to bag $i$ the \textit{size} of bag $i$ and denoted 
$a_i = \sum_{j:B(j) = i} p_j$.
The exact mapping $B$ is not important for the second stage since the makespan depends only on the bag sizes.

In the second stage, we are given the bag sizes $a_1,\ldots,a_b$ and the previously unknown machine speeds $s_1, \ldots, s_m\geq 0$, not all equal to $0$.
We partition the bag indices $\{1,\ldots,b\}$ into $m$ sets $M_1$, \ldots, $M_m$, representing the assignment to the $m$ machines.
Machine $i$ then has a completion time $C_i=(\sum_{j\in M_i}a_j)/s_i$; for $s_i=0$ we require $M_i=\emptyset$ and set $C_i=0$, i.e., machine of speed $0$ does not accept any jobs. Finally, $C_{\max}=\max_{i=1}^m C_i$ is the makespan, i.e., the length of the schedule.

Let $C_{\max}^*$ denote the makespan of the adversary, who does not have to create bags and can assign jobs directly to machines. Alternatively and equivalently, 
the adversary also creates bags, but with the knowledge of the speeds already in the first stage.

We call a first stage algorithm $\rho$-\textit{robust} if, for
all possible inputs and for all possible choices of machine speeds,
there exists a second-stage assignment of bags to machines such that
$C_{\max} \le \rho\cdot C_{\max}^*$. 
Intuitively, an algorithm is $\rho$-robust if it performs at most
$\rho$ times worse than the adversary. 

The previous definition implicitly assumes that the second stage is
solved optimally.  This is reasonable, as the scheduling on uniformly
related machines allows PTAS, see~\cite{speeds,textbook}, so the
chosen (presumably optimal) second-stage solution can be replaced by
an arbitrarily good approximation.  Also, our proofs show that
the second-stage algorithm can be implemented by efficient greedy
algorithms without any loss of performance, once the optimal makespan
or its approximation is known.

We call the special cases of the problem {\it sand\/}, {\it bricks\/}, {\it rocks\/} and {\it pebbles\/}.
Sand, bricks, and rocks were introduced by Eberle {\it et al.}~\cite{eberle}. 
These words represent the types of jobs. 
\begin{itemize}
    \item 
        Rocks can be any shape or size and represent jobs of arbitrary processing time. This is the most general setting.
    \item
        Bricks are all the same and represent jobs with equal processing times.
    \item
        Sand grains are very small and represent infinitesimally small processing times.
    \item
        Pebbles represent jobs that are relatively small compared to the average load of all machines.
        We call an instance of speed-robust scheduling $q$-pebbles if
        $p_j \le q \cdot \frac{P}{m}$
        holds for all jobs $j$.
\end{itemize}

\subsection{Previous results}
\label{sec-related}

The two-stage scheduling problem with uncertainty in the machine
environment was introduced by Stein and Zhong
\cite{zero_one}.  They focused on the case of $m$
identical machines where in the second stage some machine might fail
and then do not process any tasks. This amounts to a special case of
speed-robust scheduling where $s_i \in \{0, 1\}$ for $1 \le i \le
m$. They gave lower bounds of $4/3$ for equal-size jobs (bricks) and
$(\sqrt 2+1)/2\approx 1.207$ for infinitesimal jobs (sand). Their
algorithms were later improved by Eberle \textit{et
  al.}~\cite{eberle} to algorithms matching the lower
bounds in both cases.

Our immediate predecessor, Eberle \textit{et al.}~\cite{eberle},
introduced the speed-robust scheduling for general speeds, i.e., on
uniformly related machines.  They studied mainly the case $b=m$, i.e.,
the case when the number of bags is equal to the number of machines.
For this case they gave tight bounds for sand for every $m$, for large
$m$ the bound approaches $e/(e-1)\approx 1.58$. For equal-size jobs
(bricks), they have shown an upper bound of $1.8$.

For the most general case of rocks, the strongest known result is the
algorithm with the robustness factor at most $1 + (m-1)/b$, which
equals $2-1/m$ for $b=m$, given also by Eberle \textit{et
  al.}~\cite{eberle}.
It remains an interesting open problem to improve this bound, in
particular to give an upper bound $2-\varepsilon$ for rocks and $b=m$.

\subsection{Our results}

We now describe our results and compare them to the previous ones in
each of the scenarios.

\subparagraph*{Sand.}
For sand, we give matching lower and upper bounds for any $b$ and $m$. Namely, for $b\geq m$ we give an optimal algorithm which is $\overline{\rho}(m, b)$-robust for  
\begin{equation}
\label{eq-rho}
    \overline{\rho}(m, b) = \frac{m^b}{m^b - (m-1)^b}
    =\frac1{1-\left(1-\frac1m\right)^b}\,.    
\end{equation}
This matches the results of Eberle \textit{et
  al.}~\cite{eberle} who gave an algorithm with the
robustness factor
equal to $\overline{\rho}(m,b)\leq e/(e-1)\approx 1.58$ for $b=m$,
generalizes them to arbitrary $b\geq m$ and
significantly simplifies the proof.

An interesting case is when the
number of bags is a constant multiple of $m$.  For a fixed $\alpha\geq
1$ and $b=\alpha m$,
our bound approaches $1/(1-e^{-\alpha})$ from below for a large $m$.
For example,
doubling the number of bags to $b=2m$ decreases the robustness factor
from $1.58$ to $1.16$.

If $b<m$, the second-stage algorithm uses only the $b$ fastest machines,
so we can decrease $m$ to $m'=b$ and tight results with robustness factor
$\overline{\rho}(m',b)=\overline{\rho}(b,b)$ follow already from~\cite{eberle}.

\subparagraph*{Pebbles.}
For the new case of $q$-pebbles and $b\geq m$, we give a
$(\overline{\rho}(m,b) + q)$-robust algorithm.  For $p<0.42$, this
gives an algorithm with the robustness factor below $2$, i.e., below the currently strongest known
upper bound for rocks.

\subparagraph*{Bricks.}
As our main result, we give a $1.6$-robust algorithm for bricks for $b=m$. This improves the bound of $1.8$ from Eberle \textit{et al.}~\cite{eberle}.

Furthermore, as a direct application of our results for pebbles we give a $(\overline{\rho}(m, b) + m/n)$-robust algorithm for any $n$ and $b \ge m$. 
This improves and generalizes the
$((1 + m/n)\overline{\rho}(m,m))$-robust algorithm for $b=m$ given by Eberle \textit{et al.}~\cite{eberle}. Namely, we improve the 
multiplicative factor of $(1 + m/n)$ to only an additive term of $m/n$. 

\subparagraph*{Structure of the paper.}
We give some general preliminaries in Section~\ref{sec-prelim}. We
give the results for sand and pebbles in Sections~\ref{sec-sand}
and~\ref{sec-pebbles}.  We focus on our main result for bricks in
Section~\ref{sec-bricks}.
Some small cases need computer verification
or tabulation of parameters, results of these are given in
Appendix~\ref{sec-bricks-computer}.

\section{Preliminaries}
\label{sec-prelim}

We assume that the processing times, the machine speeds, and the
bag sizes are always listed in a non-increasing order.

In the rest of this paper, we will make two assumptions below that
restrict the speeds to particular special cases. This is without loss of
generality, leveraging the fact that the algorithm must commit the bag
sizes in the first phase without knowing the speeds.

\begin{itemize}
    \item 
The optimal makespan is equal to 1. This implies that the robustness
factor is equal to the makespan of the algorithm.
        
Scaling all the speeds does not change the ratio of the makespans of
our algorithm and the adversary.  Thus for every instance of the
problem, there exists another instance with $C^*_{\max} = 1$ that
differs only in the speeds and the ratio of makespans of our algorithm
and the adversary remains the same.  It follows that any first-stage
algorithm that is $\rho$-robust for instances with $C^*_{\max} = 1$ is
$\rho$-robust for general instances, too.
    \item 
        The sum of the processing times of all jobs equals to the sum of the speeds of all the machines, i.e., $P=\sum_{i=1}^m p_i = \sum_{i=1}^m s_i$.
        In other words, the adversary is fully utilizing all the machines, and the completion time of all the machines with non-zero speed is equal to 1, using the previous assumption.
        
        If there is some machine $i$ with $s_i>0$ and completion time $C^*_i < 1$ in the optimal schedule, we change its speed to $s'_i = C^*_is_i$.
        This does not change the optimal makespan of the adversary and the makespan of the algorithm can only increase.
Once again, it follows that any first-stage algorithm that is $\rho$-robust
for these special instances is also $\rho$-robust for general instances. 
\end{itemize}

For the second stage, typically, we use a simple greedy algorithm for
the second stage instead of analyzing the optimal schedule.
Technically, for an algorithm we need to know the optimal makespan (to
modify the speeds appropriately, according to the assumptions
above). However note that first we can approximate the makespan and
second the algorithm is only used as a tool in the analysis.

For sand and pebbles we use Algorithm \nameref{alg-greedy} (see
below), a variant of the well-known LPT algorithm.  It is
parameterized by $\rho$, the robustness factor to be achieved.  At the
beginning, every machine is assigned a capacity equal to its speed
multiplied by $\rho$.  The algorithm then goes through all the bags
from large to small, assigns them on the machine with the largest
capacity remaining, and decreases the capacity appropriately.  If the
capacities remain non-negative at the end, the makespan of the created
assignment is at most $\rho$ since machine $i$ has been assigned jobs
of total processing times at most $\rho s_i$.

\begin{algorithm}
    \caption{\normalfont \scshape GreedyAssignment}
    \begin{algorithmic}
        \State {\bf Input}: bag sizes $a_1 \ge \cdots \ge a_b$;
         machine speeds $s_1\ge \cdots \ge s_m$;
        desired robustness factor $\rho$
        \For{$i \gets 1$ to $m$}
            \State 
            $c_i \gets \rho s_i$
            \Comment{Initialize the capacities of all machines}
            \State 
            $M_i=\emptyset$
             \Comment{Initialize the assignment}
        \EndFor
        \For{$k \gets 1$ to $b$}
            \State $i \gets $ index of a machine with the largest $c_i$
            \State $M_i \gets M_i\cup \{k\}$
            \Comment{Assign bag $k$ to machine $i$}
            \State $c_i \gets c_i - a_k$
            \Comment{Decrease the remaining capacity of the selected machine}
        \EndFor
        \State \Return $M_1, \ldots, M_m$
    \end{algorithmic}
    \label{alg-greedy}
\end{algorithm}

We use this to formulate the following sufficient condition for
$\rho$-robustness of an algorithm
which is instrumental in proving the upper bounds for sand and pebbles. 

\begin{theorem}
    \label{thm-greedy}
 If a first-stage algorithm always produces bag sizes satisfying inequalities
    $\displaystyle
        a_k \le \frac{\rho P - \sum_{j=1}^{k-1}a_j}{m}\,,
    $
    for all $k=1,\ldots,b$, then the algorithm is $\rho$-robust.    
\end{theorem}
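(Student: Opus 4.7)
The plan is to analyze what happens when the second stage runs Algorithm \nameref{alg-greedy} on the bag sizes $a_1 \ge \cdots \ge a_b$ produced by the assumed first-stage algorithm, with parameter $\rho$. Invoking the two simplifying assumptions from Section~\ref{sec-prelim}, it suffices to prove $C_{\max}\leq \rho$ on every instance with $C^*_{\max}=1$ and $P=\sum_{i=1}^m s_i$. If I can show that the final capacities $c_i$ returned by the algorithm are all non-negative, then each machine $i$ is assigned bags of total size at most $\rho s_i$, so its completion time is at most $\rho$, and $C_{\max}\leq \rho = \rho\cdot C^*_{\max}$, as required. (Machines with $s_i=0$ start with $c_i=0$ and, as the argument below shows, never receive a bag.)

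The key step is an invariant showing that capacities remain non-negative throughout the execution. I would prove by induction on $k$ that immediately before the $k$-th iteration of the outer loop, every $c_i\ge 0$ and
\[
    \sum_{i=1}^m c_i \;=\; \rho P - \sum_{j=1}^{k-1} a_j.
\]
The base case $k=1$ holds by the initialization $c_i=\rho s_i\geq 0$ together with the assumption $\sum s_i=P$. For the inductive step, the averaging argument gives
\[
    \max_{1\le i\le m} c_i \;\ge\; \frac{1}{m}\sum_{i=1}^m c_i \;=\; \frac{\rho P - \sum_{j=1}^{k-1} a_j}{m} \;\ge\; a_k,
\]
where the last inequality is exactly the hypothesis of the theorem. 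Hence subtracting $a_k$ from the largest capacity keeps it non-negative, all other capacities are unchanged, and the sum decreases by exactly $a_k$, preserving both parts of the invariant for iteration $k+1$.

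Applying the invariant after the final iteration shows $c_i\ge 0$ for all $i$, which by the discussion above yields $C_{\max}\le\rho$. The only subtlety I anticipate is handling machines with $s_i=0$: they begin with $c_i=0$, and since the chosen machine in each step has capacity at least $a_k\ge 0$, any such zero-speed machine is selected only if $a_k=0$, so effectively no non-empty bag is ever assigned to it, matching the problem's stipulation $M_i=\emptyset$ for $s_i=0$. This is essentially a bookkeeping matter rather than a genuine obstacle; the core of the proof is the one-line averaging argument above.
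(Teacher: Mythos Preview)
Your proposal is correct and follows essentially the same approach as the paper: run \nameref{alg-greedy} in the second stage, track the total remaining capacity $\sum_i c_i = \rho P - \sum_{j<k} a_j$, and use averaging to guarantee a machine with capacity at least $a_k$. The paper states this in one paragraph without the explicit induction or the $s_i=0$ discussion, but the argument is identical.
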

\begin{proof}
    Recall that we assume $\sum_{i=1}^m s_i = P$.
    We claim that the second-stage algorithm \nameref{alg-greedy} produces an assignment with makespan at most $\rho$.
    We only need to show that there is a machine with capacity at least $a_k$ when assigning the $k$th bag.
    The initial total capacity was $\rho P$ and was already decreased by $\sum_{j=1}^{k-1}a_j$ at the time of assigning bag $a_k$.
    It follows that the remaining capacity is equal to
    $\rho P - \sum_{j= 1}^{k - 1}a_j$  
    and thus there exists a machine with capacity at least
    $(\rho P - \sum_{j=1}^{k-1}a_j)/m\geq a_k$.
\end{proof}


\section{Sand}
\label{sec-sand}


Intuitively, the case of sand corresponds to the limit case where $n$
is large and all the jobs are small and have equal sizes. One can view
this as an infinite number of infinitesimal jobs.

More formally, we are given just $m$, $b$, and $P$ as an input of the
first stage.  The result of the first stage are $b$ non-negative reals
$a_1, \ldots, a_b$ whose sum equals $P$. The formulation of the second
stage remains the same.

The model of infinitesimally small jobs resembles preemptive scheduling. 
In the optimal algorithm for preemptive scheduling~\cite{preemptive}, one needs to maintain the loads of machines in a geometric sequence with common ratio $m/(m-1)$ for $m$ machines, roughly speaking.
The proofs for sand show that here the same geometric sequence is also crucial, in particular it is used for the bag sizes in the algorithm.
We now describe the sequence and state its properties useful both for the upper and lower bounds. 

We define $U = m^b$, $L = m^b - (m - 1)^b$ and $t_j = m^{b - j}(m -
1)^{j - 1}$ for $j \in \{1, \ldots, b\}$.

Observe that equation~(\ref{eq-rho}) defines $\overline{\rho}$ as  $\overline{\rho}(m,b) = U/L$. 
\begin{lemma}
   For all $k=1,\ldots,b$, it holds that $
        \sum_{j = 1}^k t_j = U - (m - 1)t_{k}$. 
        In particular, $\sum_{j = 1}^b t_j=U-(m-1)^b=L$.
    \label{lem-prefix-sums}
\end{lemma}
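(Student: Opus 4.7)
The plan is to prove the identity by induction on $k$, since each $t_{k+1}$ differs from $t_k$ by exactly a factor of $(m-1)/m$, which should interact cleanly with the form of the claimed right-hand side.

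For the base case $k=1$, I would just compute: $t_1 = m^{b-1}$, and $U - (m-1)t_1 = m^b - (m-1)m^{b-1} = m^{b-1}$, so both sides agree. For the inductive step, assuming $\sum_{j=1}^k t_j = U - (m-1)t_k$, I would add $t_{k+1}$ to both sides and check that $-(m-1)t_k + t_{k+1} = -(m-1)t_{k+1}$. Plugging in the definition, this boils down to verifying
\[
-(m-1)\,m^{b-k}(m-1)^{k-1} + m^{b-k-1}(m-1)^k = -(m-1)\,m^{b-k-1}(m-1)^k,
\]
which after factoring out $m^{b-k-1}(m-1)^{k-1}$ is a trivial arithmetic identity in $m$.

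The "in particular" clause follows by plugging $k=b$ into the general formula: $(m-1)t_b = (m-1) \cdot m^0 (m-1)^{b-1} = (m-1)^b$, so the sum equals $m^b - (m-1)^b = L$ by definition.

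There is no real obstacle here; the only thing to be careful about is bookkeeping of the exponents of $m$ and $m-1$ in the inductive step. As a sanity alternative, one can also observe directly that $t_j$ is a geometric sequence with first term $m^{b-1}$ and common ratio $(m-1)/m$, so the closed form $\sum_{j=1}^k t_j = m^b(1 - ((m-1)/m)^k) = m^b - m^{b-k}(m-1)^k$ coincides with $U - (m-1)t_k$; this gives an equally short non-inductive proof which may be preferable if one wants to emphasize the geometric nature of the sequence that will reappear in the algorithm.
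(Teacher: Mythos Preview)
Your proof is correct and follows essentially the same inductive argument as the paper: both verify the base case $k=1$ directly and then reduce the inductive step to the identity $-(m-1)t_k + t_{k+1} = -(m-1)t_{k+1}$, which is checked by expanding the definition of $t_j$. Your additional remark deriving the closed form via the geometric-series formula is a nice sanity check not present in the paper, but the core argument is the same.
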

\begin{proof}
    We proceed by induction on $k$.
    The lemma holds for $k = 1$ since $U = mt_1$ and thus $t_1 = U - (m - 1)t_1$.
    Now suppose it holds for $k$. We can derive
    \[
        \sum_{j = 1}^{k + 1} t_j = U - (m - 1)t_k + t_{k + 1} = U - m^{b - k}(m - 1)^{k} + m^{b - k - 1}(m - 1)^k = U - (m - 1)t_{k + 1}
    \]
    which completes the induction step.
\end{proof}

\begin{figure}[ht]
    \centering
    \includegraphics{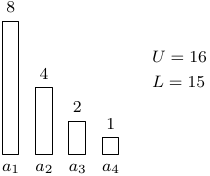}
    \caption{An example of bag sizes chosen for $m = 2$ and $b = 4$.}
    \label{fig-sand_bagsize}
\end{figure}

To get some intuition behind the algorithm for sand, it might be
useful to consider the case $m = 2$, see
Figure~\ref{fig-sand_bagsize}.  Suppose that $P = L = 2^b-1$, choose
the bag sizes $a_k = t_k = 2^{b - k}$.  For $m=2$ the sizes are powers
of two, so it is easy to see that we can achieve the robustness ratio
of $1 + 1/(2^b - 1)= 2^b/(2^b - 1)$ as follows: The adversary chooses
any speeds $s_1, s_2$ such that $s_1 + s_2 =P= 2^b - 1$.  The
capacities of the machines (as in \nameref{alg-greedy}) then satisfy
$c_1 + c_2 = 2^b$ and thus $\lfloor c_1 \rfloor + \lfloor c_2 \rfloor
\ge 2^b - 1$.  We can express $\lfloor c_1 \rfloor$ in binary, assign
the corresponding bags on the first machine and the remaining bags to
the second machine.

\subsection{Upper bound}
\label{sec-sand-upper}

We use a different approach than Eberle {\it et al.}~\cite{eberle} for the proof of the upper bound.
We choose the same bag sizes (for $b=m$) but we simplify the proof by use of 
Theorem~\ref{thm-greedy}.
Algorithm \nameref{alg-sand-optimal} describes the bag sizes.
Note that the sum of bag sizes produced by \nameref{alg-sand-optimal} is $P$, using Lemma~\ref{lem-prefix-sums}.

\begin{algorithm}
    \caption{\normalfont \scshape Sand}
    \begin{algorithmic}
        \State {\bf Input}: number of bags $b$;
        number of machines $m$;
        total amount of sand $P$
        \State $L \gets m^b - (m - 1)^b$
        \For{$j \gets 1$ to $b$}
            $a_j \gets t_j \frac{P}{L}$
        \EndFor
        \State 
        \Return $a_1, a_2, \ldots, a_b$
    \end{algorithmic}
    \label{alg-sand-optimal}
\end{algorithm}

\begin{theorem}
    Algorithm \nameref{alg-sand-optimal} is $\overline{\rho}(m,b)$-robust for sand, for $\overline{\rho}$ defined by {\rm (\ref{eq-rho})}.
    \label{thm-sand-upper}
\end{theorem}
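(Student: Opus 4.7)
The plan is to apply Theorem~\ref{thm-greedy} directly, so I only need to verify the inequality
\[
    a_k \le \frac{\overline{\rho}(m,b) P - \sum_{j=1}^{k-1} a_j}{m}
\]
for every $k=1,\ldots,b$. Substituting the definition $a_j = t_j P/L$ together with $\overline{\rho}(m,b)=U/L$, and multiplying both sides by $L/P$, the required inequality becomes
\[
    m\,t_k \le U - \sum_{j=1}^{k-1} t_j.
\]

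The key step is to rewrite the right-hand side using Lemma~\ref{lem-prefix-sums}. For $k\ge 2$ the lemma (applied with index $k-1$) gives $\sum_{j=1}^{k-1} t_j = U - (m-1)t_{k-1}$, so the inequality to verify reduces to $m\,t_k \le (m-1)\,t_{k-1}$. Plugging in $t_k = m^{b-k}(m-1)^{k-1}$ and $t_{k-1}=m^{b-k+1}(m-1)^{k-2}$, both sides equal $m^{b-k+1}(m-1)^{k-1}$, so equality holds. For $k=1$ the sum is empty and the inequality reads $m\,t_1\le U$, which again holds with equality since $t_1=m^{b-1}$ and $U=m^b$.

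Thus the hypothesis of Theorem~\ref{thm-greedy} is satisfied (in fact tightly at every step), and the theorem immediately yields $\overline{\rho}(m,b)$-robustness of \nameref{alg-sand-optimal}. The only subtlety worth flagging is that we also need $\sum_{j=1}^b a_j = P$, so that the speed-sum normalization $\sum_i s_i = P$ from the preliminaries matches the total bag mass; this is exactly the ``in particular'' part of Lemma~\ref{lem-prefix-sums}, which gives $\sum_{j=1}^b t_j = L$ and hence $\sum_{j=1}^b a_j = L\cdot P/L = P$. No step presents a real obstacle: once the bag sizes are set to the geometric sequence $t_j P/L$, the recurrence behind Lemma~\ref{lem-prefix-sums} makes the greedy-capacity bound hold with equality, which is precisely the reason this choice of sizes is optimal.
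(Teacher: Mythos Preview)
Your proof is correct and follows essentially the same approach as the paper: verify the hypothesis of Theorem~\ref{thm-greedy} using Lemma~\ref{lem-prefix-sums}, obtaining equality in every step. The only cosmetic difference is that the paper applies Lemma~\ref{lem-prefix-sums} at index $k$ (writing $U-\sum_{j=1}^{k-1}t_j = U-\sum_{j=1}^{k}t_j+t_k = (m-1)t_k+t_k = mt_k$), which handles all $k$ uniformly and avoids your case split between $k=1$ and $k\ge2$.
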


\begin{proof}
    We assume $P = L$ since it does not change the ratio of our makespan and the makespan of the adversary.
    Under this assumption, \nameref{alg-sand-optimal} produces bag sizes $a_k = t_k$.
    
    It is sufficient to show that the bag sizes produced by \nameref{alg-sand-optimal} satisfy the condition of Theorem~\ref{thm-greedy}.
    Let us prove the $k$th inequality in the assumption of the theorem. We have
    \[
        \overline{\rho}(m,b) P - \sum_{j = 1}^{k - 1}a_j = \frac{U}{L} L - \sum_{j = 1}^{k-1}t_j = U - \sum_{j = 1}^{k}t_j + t_k\,.
    \]
    According to Lemma \ref{lem-prefix-sums}, we can simplify the
    right-hand side as follows.
    \[
        U - \sum_{j = 1}^{k}t_j + t_k = U - (U - (m-1) t_k) + t_k = mt_k = ma_k\,.
        \]
The $k$th inequality in the assumption of Theorem~\ref{thm-greedy}
follows, in fact it holds with equality.
        
Theorem~\ref{thm-greedy} now implies that there exists an assignment
with makespan at most $\overline{\rho}(m,b)$. 
\end{proof}

\subsection{Lower bound}
\label{sec-sand-lower}

The following proof is a slightly modified and generalized version of the proof by Eberle {\it et al.}~\cite{eberle}.
The main difference is that we do not require the number of bags and machines to be the same.

\begin{theorem}
    No deterministic algorithm for sand may have a robustness factor smaller than $\overline{\rho}(m,b)$, for $\overline{\rho}$ defined by {\rm (\ref{eq-rho})}.
    \label{thm-sand-lower}
\end{theorem}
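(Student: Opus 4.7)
The plan is adversarial. I fix any deterministic first-stage algorithm and let $a_1 \geq a_2 \geq \cdots \geq a_b$ be the bag sizes it produces (normalizing so that $P = \sum_j a_j = 1$). The goal is to construct a speed vector $(s_1, \ldots, s_m)$ with $\sum_i s_i = 1$---so that $C^*_{\max} = 1$ automatically for sand---such that every legal second-stage assignment of these fixed bags has makespan at least $\overline{\rho}(m,b)$.

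The guiding intuition is the tightness of Algorithm \nameref{alg-sand-optimal}, whose ideal bag sequence is the geometric progression $a^*_j = t_j/L$ of common ratio $(m-1)/m$: the adversary must penalize any deviation from this ideal. I would do this via a family of $b$ speed profiles $\sigma_1, \ldots, \sigma_b$, each tuned to a prefix length $k$. Profile $\sigma_k$ distributes its unit of total speed across the fastest machines in a geometric pattern mirroring $t_i$, which (i) lets the adversary spread the sand to achieve $C^*_{\max} = 1$, and (ii) forces the algorithm either to bunch its first $k$ bags on a single fast machine or to overflow one of the remaining machines with the tail bags.

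The main technical claim is that for any sequence $a_1 \geq \cdots \geq a_b$ there is some $k^*$ for which $\sigma_{k^*}$ forces algorithm makespan at least $\overline{\rho}(m,b)$. I would argue this by picking $k^*$ to be the smallest index at which the prefix $S_{k^*} = \sum_{j \leq k^*} a_j$ crosses the ideal threshold $S^*_{k^*} = \sum_{j \leq k^*} t_j/L$; such an index must exist since both sequences are nonincreasing and sum to $1$. For $j < k^*$ the bag $a_j$ is at most the ideal $a^*_j$, which together with the geometric speeds of $\sigma_{k^*}$ means that no assignment can simultaneously accommodate the cumulative prefix and the oversized bag $a_{k^*}$ without overloading some machine. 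The overload factor, when simplified using Lemma~\ref{lem-prefix-sums}'s telescoping identity $\sum_{j=1}^{k} t_j = U - (m-1)t_k$ and $\sum_{j=1}^b t_j = L$, collapses to exactly $U/L = \overline{\rho}(m,b)$.

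The main obstacle I expect is defining the profile family cleanly enough that the case-analysis tiles every possible $a$-sequence with no gap, and in particular handling the extension from $b = m$ to $b \geq m$. In the $b = m$ case treated by Eberle et al., each machine can be matched bijectively to a single bag, so the pigeonhole reduces to a sorted-matching inequality. For $b > m$, profiles $\sigma_k$ with $k > m$ must still use all $m$ machines but with geometrically decaying speeds, so that the small tail bags cannot be absorbed without paying the $\overline{\rho}$ penalty. Managing this extra freedom cleanly, while keeping the threshold crossings telescoping to exactly $\overline{\rho}(m,b)$, is where I expect the bookkeeping to get delicate.
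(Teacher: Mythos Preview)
Your high-level adversarial strategy is right, but two concrete choices in your plan are off and make the argument harder than it needs to be.

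First, the speed profiles. You propose that $\sigma_k$ spreads speed ``in a geometric pattern mirroring $t_i$'' and worry that for $k>m$ you will need geometrically decaying speeds across all $m$ machines. The paper's construction is much simpler and works uniformly for every $b\ge m$: configuration $\mathcal{S}_k$ has \emph{one} fast machine of speed $s_1 = U-(m-1)t_k=\sum_{j\le k}t_j$ and $m-1$ \emph{identical} slow machines each of speed $t_k$ (normalizing $P=U$). There is no geometric structure among the machine speeds at all; the geometric sequence $t_j$ only enters through the scalar parameter $t_k$. With this profile the two-case analysis is immediate: if any of the first $k$ bags lands on a slow machine, its completion time is at least $a_k/t_k$; if all $k$ go to the fast machine, its load is $\sum_{j\le k}a_j$ against speed $s_1=\sum_{j\le k}t_j$. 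Your anticipated difficulty with $b>m$ simply does not arise, because the index $k\in\{1,\ldots,b\}$ parameterizes the slow speed $t_k$, not the number of machines.

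Second, the choice of $k^*$. You define it via a \emph{prefix-sum} crossing (smallest $k$ with $S_k\ge S^*_k$) and then assert ``for $j<k^*$ the bag $a_j$ is at most the ideal $a^*_j$''. That implication is false: a prefix-sum condition does not control individual terms. The paper instead takes $k_{\max}$ to be the \emph{largest} index with $a_{k_{\max}}\ge (U/L)\,t_{k_{\max}}$, a per-term condition. This choice gives both pieces you need at once: since bags are sorted, $a_j\ge a_{k_{\max}}\ge (U/L)t_{k_{\max}}$ for $j\le k_{\max}$ (Case~1), and by maximality $a_j<(U/L)t_j$ for $j>k_{\max}$, which after summing and using Lemma~\ref{lem-prefix-sums} yields $\sum_{j\le k_{\max}}a_j\ge (U/L)s_1$ (Case~2). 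Switching your index selection from prefix sums to this per-term ``largest violating index'' is what makes the telescoping collapse cleanly to $U/L$.
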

\begin{proof}
    Let us without loss of generality assume $P = U$ (be aware that we assumed $P = L$ in the proof of the upper bound).
    Let us denote the chosen bag sizes by $a_1 \ge \cdots \ge a_b$.
    We will restrict the adversary to $b$ different speed configurations indexed by $k$, where
    \[
        \mathcal{S}_k = \{s_1 = U - (m-1) t_k,~ s_2 = t_k,~s_3 = t_k,~\ldots,~s_{m}=t_k\}\,.
    \]
See Figure~\ref{fig-sand_speeds} for an example. Note that the sum of machine speeds is equal to $U$ in every configuration and hence the makespan of the adversary is indeed 1 as we always assume. 
    In every speed configuration, there are $m - 1$ slow machines and one fast machine, since 
    \[
        s_1 = U - (m - 1)t_k = \sum_{j=1}^kt_j \ge t_k\,.
    \]

        \begin{figure}[ht]
        \centering
        \includegraphics{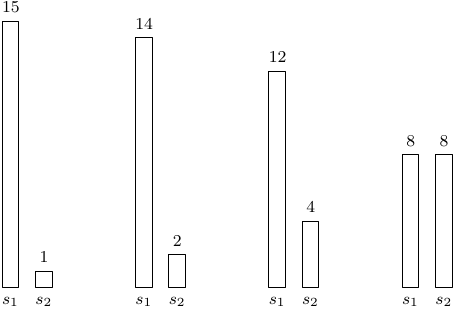}
        \caption{An example of speed configurations considered by the adversary for $m = 2$ and $b = 4$.}
        \label{fig-sand_speeds}
    \end{figure}

    Let $k_{\max}$ be the largest index such that $a_{k_{\max}} \ge \frac{U}{L} t_{k_{\max}}$.
    This index must exist since
    \[
        \sum_{j=1}^ba_j = U = \frac UL L = \frac UL \sum_{j=1}^bt_j\,.
    \]
    Now let the adversary choose the speed configuration $\mathcal{S}_{k_{\max}}$. We distinguish two cases depending on the bag assignment in the second stage.

\noindent \textbf{Case 1:}     
At least one of the bags $a_1, \ldots, a_{k_{\max}}$ is assigned to a slow machine. 
            The makespan is at least the completion time of this
            machine which is at least 
            \[
                \frac{a_j}{t_{k_{\max}}} \ge \frac{a_{k_{\max}}}{t_{k_{\max}}} \ge \frac UL\,.
            \] 

\noindent \textbf{Case 2:}     
            All of the bags $a_1, \ldots, a_{k_{\max}}$ are assigned to the fast machine. Total size of the bags assigned to the fast machine is at least
            \[
                 \sum_{j=1}^{k_{\max}}a_j = U - \sum_{j=k_{\max} + 1}^{b}a_j\,.
            \]
            By definition of $k_{\max}$ it holds that $a_j < \frac{U}{L}t_j$ for $j > k_{\max}$ and we can bound
            \[
               U-\sum_{j=k_{\max} + 1}^{b}a_j \ge U - \frac UL\sum_{j=k_{\max} + 1}^{b}t_j\,.
            \] 
            Since $\sum_{j = 1}^b t_j = L$, we can rearrange the
            right-hand side as follows
            \[
                U - \frac UL\sum_{j=k_{\max} + 1}^{b}t_j = U - \frac UL \left(L - \sum_{j=1}^{k_{\max}}t_j\right) = \frac UL \sum_{j=1}^{k_{\max}}t_j\,.
            \] 
            By Lemma \ref{lem-prefix-sums} it holds that
            \[
                 \frac UL \sum_{j=1}^{k_{\max}}t_j = \frac UL \left(U - (m - 1)t_{k_{\max}}\right) = \frac ULs_1
            \] 
            due to the choice of $s_1$ in the configuration $\mathcal{S}_{k_{\max}}$.
            Thus the makespan would be at least $U/L=\overline{\rho}(m,b)$.

    The makespan was at least $U/L$ in both cases, hence the
    robustness factor is at least $U/L=\overline{\rho}(m,b)$ and the
    theorem follows.
\end{proof}
    
\section{Pebbles}
\label{sec-pebbles}

Recall that an instance of our problem is called $q$-pebbles if the processing times satisfy
    \[
        p_j \le q \cdot \frac{P}{m} = q \cdot \frac{\sum_{\ell=1}^n p_\ell}{m} \,.
    \] 

This definition might seem a bit unnatural at the first glance, but there is a very intuitive formulation.
The expression $\frac{P}{m}$ represents the \textit{average load of a machine}.
The definition of pebbles says that the processing times are relatively small compared to the average load of all machines.

Without loss of generality we assume in this section that the sum of processing times is $P=m$.
This transforms the condition for $q$-pebbles from the definition into
\[
    p_j \le q\,,
\]
which is easy to work with.

We use similar ideas as in the optimal algorithm for sand.
Recall the condition of Theorem~\ref{thm-greedy}
\[
    a_k \le \frac{\rho P - \sum_{j = 1}^{k - 1}a_j}{m}\,.
\]
As we have already noticed in Section~\ref{sec-sand-upper}, the optimal bag sizes for sand not only satisfy the above inequality, they actually have equality there.
The bag sizes for sand are given by the recurrence
\[
    a_k = \frac{\overline\rho(m, b) P - \sum_{j=1}^{k-1}a_j}{m}\,.
\]
When we in addition assume $P = m$, as in the case of pebbles, we get
\begin{equation}
    a_k = \overline\rho(m, b) - \frac 1 m \sum_{j=1}^{k-1}a_j. \label{eq-sand}
\end{equation}
Let $a_1, \ldots, a_b$ denote values given by the recurrence (\ref{eq-sand}) for the rest of this section.
Remember that the sum of $a_1, \ldots, a_b$ equals $P$.
Let us denote the bag sizes we will be choosing for pebbles $d_1, \ldots, d_b$.
We again want to use Theorem~\ref{thm-greedy}.
In other words, for the desired robustness factor $\rho$, we want the
bag sizes to satisfy 
\begin{equation}
    d_k \le \rho - \frac 1 m \sum_{j=1}^{k-1}d_j\,. \label{ineq-pebbles}
\end{equation}

Consider the following algorithm.
Place as many pebbles as you can into the first bag while it satisfies the inequality (\ref{ineq-pebbles}).
Then do the same for the second bag
and so on until the last bag (or until we run out of jobs).
See \nameref{alg-pebbles} for pseudocode.

\begin{algorithm}
    \caption{\normalfont \scshape Pebbles}
    \begin{algorithmic}
        \State {\bf Input}: processing times $p_1 \ge \cdots \ge p_m$;
        number of machines $m$;
        number of bags $b$;
        desired robustness factor $\rho$
        
        \State $B \gets $ empty mapping
        \For{$k \gets 1$ to $b$}
            $d_k \gets 0$
            \Comment{$d_k$ represents the size of the $k$th bag}
        \EndFor
        \State $k \gets $ 1
        \Comment{$k$ represents index of currently considered bag}
        \For{$j \gets 1$ to $n$}
            \While{$k \le b$ {\bf and} $d_{k} + p_j > \rho - \frac{1}{m}\sum_{\ell=1}^{k - 1}d_\ell$}
                $k \gets k + 1$
            \EndWhile
            \If{$k > b$}
                {\bf break}
            \EndIf
            \State $B[j] \gets k$
            \State $d_k \gets d_k + p_j$
        \EndFor
        \State \Return $B$
    \end{algorithmic}
    \label{alg-pebbles}
\end{algorithm}

\begin{theorem}
    \label{thm-pebbles}
    There exists a $(\overline{\rho}(m, b) + q)$-robust algorithm for $q$-pebbles, for $\overline{\rho}$ defined by {\rm (\ref{eq-rho})}.
\end{theorem}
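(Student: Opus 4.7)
The plan is to apply Theorem~\ref{thm-greedy} with $\rho := \overline{\rho}(m,b) + q$. Under the normalization $P = m$ used throughout this section, the hypothesis of Theorem~\ref{thm-greedy} reduces to $d_k \le \rho - \frac{1}{m}\sum_{\ell<k} d_\ell$, which is precisely the condition enforced by the inner \textsc{while}-loop of Algorithm \nameref{alg-pebbles}. Consequently, if the algorithm never executes the \textbf{break} statement and assigns every job to some bag, the produced sizes automatically satisfy the hypothesis and $\rho$-robustness follows. The entire task is therefore to rule out break-out.

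The key observation is that whenever the algorithm advances from bag $k$ to bag $k+1$, the advance is triggered by some job $p_j$ that fails the inner test, so $d_k + p_j > \rho - \frac{1}{m}\sum_{\ell<k}d_\ell$; combined with $p_j \le q$ from the $q$-pebbles assumption, this gives the strict lower bound
\[
   d_k \;>\; \overline{\rho}(m,b) - \frac{1}{m}\sum_{\ell<k}d_\ell .
\]
Writing $S_k := \sum_{\ell=1}^k d_\ell$ and letting $T_k := \sum_{\ell=1}^k a_\ell$ denote the prefix sums of the sand bag sizes produced by recurrence~(\ref{eq-sand}), the display above rearranges to $S_k > \overline{\rho}(m,b) + \tfrac{m-1}{m}S_{k-1}$, while the sand recurrence itself gives the matching equality $T_k = \overline{\rho}(m,b) + \tfrac{m-1}{m}T_{k-1}$. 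Starting from $S_0 = T_0 = 0$, a one-step induction then yields $S_k > T_k$ for every bag $k \ge 1$ that the algorithm has closed.

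Suppose for contradiction that Algorithm \nameref{alg-pebbles} does break out. Then $k$ must have been incremented past $b$, so every bag $k \in \{1,\ldots,b\}$ has been closed, and the induction above yields $S_b > T_b$. However $T_b = \sum_{\ell=1}^b a_\ell = P$ by construction of the sand sizes, while $S_b$ is the total processing time of jobs actually placed, which is at most $P$. This contradiction rules out break-out and completes the proof. The step that will require the most care is the passage from the closing inequality to the strict prefix-sum inequality $S_k > T_k$: one has to notice that subtracting the pebble bound $q$ from the $\rho$-capacity exactly recovers the sand recurrence, so that a single application of $p_j \le q$ per closed bag suffices to overcome the deficit $q$ in the target robustness --- no cumulative pebble argument is needed.
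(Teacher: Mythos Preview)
Your proof is correct and follows essentially the same route as the paper's: both argue by contradiction that if some job were left unassigned, then every bag $k$ was closed by a job of size at most $q$, giving $d_k > \overline{\rho}(m,b) - \frac{1}{m}\sum_{\ell<k}d_\ell$, and an induction against the sand recurrence~(\ref{eq-sand}) forces $\sum_{k\le b} d_k \ge \sum_{k\le b} a_k = P$, contradicting that not all jobs were placed. The only difference is cosmetic---the paper carries the induction on the difference $\sum_{j\le k}(d_j - a_j) \ge \frac{m-1}{m}\sum_{j\le k-1}(d_j - a_j)$, whereas you write separate affine recurrences for $S_k$ and $T_k$---and you are a bit more explicit than the paper about why the no-break-out conclusion feeds into Theorem~\ref{thm-greedy}.
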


\begin{proof}
We show that
    Algorithm \nameref{alg-pebbles} puts every job in some bag for $\rho = \overline{\rho}(m, b) + q$.

    Suppose for a contradiction that the algorithm does not use all the jobs.
    Then the bag sizes $d_k$ at the end of the algorithm must satisfy
    \[
        d_k + q > \rho - \frac 1 m \sum_{j=1}^{k-1}d_j\,.
    \]
Indeed, if for some $k$ this inequality is not satisfied, adding one more job of size at most $p$ to bag $k$ would not violate the inequality (\ref{ineq-pebbles}) and the algorithm would have done so.

Plugging in the expression for $\rho$ gives us
    \begin{equation}
      d_k > \overline{\rho}(m, b) - \frac 1 m \sum_{j=1}^{k-1}d_j.
      \label{ineq-pebbles-better}
    \end{equation}
    We are going to show
    \begin{equation}
        \sum_{j=1}^kd_j \ge \sum_{j=1}^ka_j\,,
      \label{ineq-pebbles-claim}
    \end{equation}
    for all $k \in \{0, \ldots, b\}$.
    We prove this claim by induction.
    The case $k = 0$ is trivial since the summations are empty and both sides are equal to 0.
    Let us now prove the induction step for $k$ using the equation (\ref{eq-sand}) and the inequality (\ref{ineq-pebbles-better}).
    \[
        d_k - a_k \ge \left(\overline{\rho}(m, b) - \frac 1 m \sum_{j=1}^{k-1}d_j\right) - \left(\overline{\rho}(m, b) - \frac 1 m \sum_{j=1}^{k-1}a_j\right) 
        = -\frac{1}{m}\left(\sum_{j=1}^{k-1}d_j - \sum_{j=1}^{k-1}a_j\right)
    \]
    We can now easily finish the induction step. We simplify
    \begin{align*}
        \sum_{j=1}^kd_j &- \sum_{j=1}^ka_j = \left(\sum_{j=1}^{k-1}d_j - \sum_{j=1}^{k-1}a_j\right) + \left(d_k - a_k\right) \\
        &\ge \left(\sum_{j=1}^{k-1}d_j - \sum_{j=1}^{k-1}a_j\right) - \frac{1}{m}\left(\sum_{j=1}^{k-1}d_j - \sum_{j=1}^{k-1}a_j\right) 
        = \frac{m-1}{m}\left(\sum_{j=1}^{k-1}d_j - \sum_{j=1}^{k-1}a_j\right),
    \end{align*}
    which is non-negative by the induction hypothesis for $k-1$ and
    thus the claim~(\ref{ineq-pebbles-claim}) holds.
    
    Using the claim~(\ref{ineq-pebbles-claim}) for $k = b$ gives us
    \[
        \sum_{j=1}^bd_j \ge \sum_{j=1}^ba_j = P\,,
    \]
    which is a contradiction with the assumption that we did not use all jobs.
\end{proof}

It is interesting to take a look at the case $b = m$.
Theorem \ref{thm-pebbles} implies that there exists an algorithm with robustness factor at most
\[
    \frac{e}{e - 1} + q \approx 1.58 + q\,.
\]
The best know result for rocks gives robustness factor $2 - 1/m$. 
This gets arbitrarily close to 2 for large $m$.
Hence we have obtained a stronger result for
\[
    q < 2 - \frac{e}{e - 1} \approx 0.42\,.
\]

\section{Bricks}
\label{sec-bricks}

In this section, we study the case of jobs with equal processing
times.  An important parameter is the ratio of the number of jobs and
the number of machines, which we denote $\lambda=n/m$. We can scale
the instance so that $p_j=1$ for all $j$, which we assume from now
on. Note that now $P=n$ and the average load is $P/m=n/m=\lambda$.

Thus the instance satisfies the definition of $p$-pebbles for
$p=1/\lambda$. Theorem~\ref{thm-pebbles} immediately implies our first
improved bound for bricks:

\begin{theorem}
    There exists an algorithm with robustness factor at most $\overline{\rho}(b, m) + m/n$ solving the problem for $n$ bricks, $m$ machines and $b$ bags. \qed
    \label{thm-bricks}
\end{theorem}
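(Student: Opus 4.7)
The statement is an immediate corollary of Theorem~\ref{thm-pebbles}, so the proof is essentially a one-line reduction: show that an instance of $n$ unit bricks on $m$ machines is a $q$-pebbles instance for $q=m/n$, and invoke the pebbles theorem.

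First, I would normalize so that every job has processing time $p_j=1$, which is legitimate because scaling all processing times (and nothing else) scales both the algorithm's makespan and the adversary's makespan by the same factor, leaving the robustness ratio unchanged. Under this normalization $P=\sum_j p_j = n$, and so the average machine load is $P/m = n/m = \lambda$.

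Next, I would verify the $q$-pebbles inequality with $q=m/n$. By definition, the instance is $q$-pebbles when $p_j \le q\cdot P/m$ for every $j$. Plugging in, we need $1 \le (m/n)\cdot(n/m) = 1$, which holds with equality. Thus the bricks instance falls into the $q$-pebbles class for $q = m/n = 1/\lambda$.

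Finally, I would apply Theorem~\ref{thm-pebbles}, which guarantees a $(\overline{\rho}(m,b) + q)$-robust algorithm for $q$-pebbles whenever $b\ge m$. Substituting $q=m/n$ yields the claimed robustness factor of $\overline{\rho}(m,b) + m/n$, which matches the statement (modulo the order of arguments to $\overline\rho$ in the displayed formula).

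There is no real obstacle here: the entire content has already been shipped in the pebbles section, and the only thing to check is that unit-sized jobs, after rescaling so that $P=n$, meet the pebbles threshold with $q=m/n$. In effect, this theorem repackages the pebbles bound in the language of the bricks parameter $\lambda=n/m$ and contrasts it with the earlier multiplicative $(1+m/n)\overline{\rho}(m,m)$ bound of Eberle \emph{et al.}, which is the improvement being advertised.
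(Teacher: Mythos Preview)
Your proposal is correct and matches the paper's own argument essentially line for line: the paper normalizes to $p_j=1$, observes that the instance is then $(1/\lambda)$-pebbles with $\lambda=n/m$, and immediately cites Theorem~\ref{thm-pebbles} (hence the bare \qed). Your remark about the swapped argument order in $\overline{\rho}(b,m)$ is also apt; the paper's definition is $\overline{\rho}(m,b)$.
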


In the rest of this section we focus on our main result, the $1.6$-robust algorithm for bricks in case $b=m$. This will have the following ingredients:
\begin{itemize}
    \item For $\lambda\geq 60$ we have $e/(e-1)+1/60<1.6$, so by Theorem~\ref{thm-bricks} we can use Algorithm \nameref{alg-pebbles}.
    \item For $\lambda<60$ we design a new algorithm \nameref{alg-bag-size}. We split the analysis into two cases.
    \begin{itemize}
        \item For $m\geq 144$, we modify its solution into a certain fractional solution, which is easier to analyze, and bound the difference between the two solutions. 
        \item For $m<144$, we have a finite number of instances, which we verify using a computer. 
    \end{itemize}
    We stress that the analysis of instances for $m<144$ shows that
    Algorithm \nameref{alg-bag-size} works here without any changes,
    too, i.e., it does not lead to an algorithm with exploding number
    of cases tailored to specific inputs.
\end{itemize}

\subsection{First stage algorithm \nameref{alg-bag-size}}
\label{sec-bricks-first}

Our assumptions on the optimal solution explained at the beginning of
Section~\ref{sec-prelim} imply that we can also restrict ourselves to
instances with $\sum_{i=1}^m s_i = n$ and furthermore the values of
speeds $s_i$ are integral, as in the optimal solution the machine
loads are necessarily integral. (Recall that this is due to the fact
that we can modify the speeds independently of the first-stage
algorithm.)

The key ingredient of the improved algorithm is to observe that the
integrality of speeds allows us to use the pigeonhole principle to create
larger bags. Furthermore, with appropriate accounting we can use the
pigeonhole principle iteratively.

Let us demonstrate this on an example.  Let $n = 13$, $m = 10$ and
$\rho = 1.6$.  The total speed of $10$ machines is $13$, so one
machine has speed at least $2$.  This means that one of the machines
will have capacity $2\rho=3.2$ and we can create and assign a bag of
size $\lfloor 2\rho\rfloor=3$.  Without integrality of the speeds,
only a machine with speed $1.3$ would be guaranteed, so the capacity
would be just above $2$.

To continue iteratively, we cannot reason about the
capacity as in Algorithm \nameref{alg-greedy}. Instead, for each bag
we reserve some integral amount of speed on one of the machines.
For this accounting, we represent the remaining unreserved total speed
by \textit{coins}.

In the example above, we pay $2$ coins for a bag of size $3$. This
seems like an overpayment compared to Algorithm \nameref{alg-greedy},
as the $2$ coins correspond to capacity $3.2$, so we waste a capacity
of $0.2$. However, after this
step, we are left with $11$ coins among the $10$ machines, and using
the integrality and the pigeonhole principle once more, we are
guaranteed to have one machine with $2$ coins (these coins may be on a
different machine
or they may be the ones remaining on the same machine). Thus we can create
another bag of size $3$. Now there are only $9$ coins remaining and we
can only create a bag of size 1 at cost 1.  See Figure~\ref{fig-13_10}
for an illustration.  Overall, the effect of integrality is more
significant than the overpayment due to rounding, and thus we are able
to obtain an improved algorithm.

\begin{figure}[ht]
    \centering
    \includegraphics{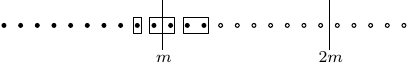}
    \caption{Graphical representation of the first three chosen bags for $n = 13$, $m = 10$. The dots represent coins and the boxes represent chosen bags. The number of coins inside a box represent the cost of the bag. Vertical lines emphasize the multiples of $m$, which determine the bag costs.}
    \label{fig-13_10}
\end{figure}

Formally, we start Algorithm \nameref{alg-bag-size} with $c=n$
coins. In each round we pay $z=\lceil c/m\rceil$, create a bag of size
$\lfloor z\cdot\rho\rfloor$ and continue with remaining coins on $m$
machines.  The \textit{cost} of a bag is the number of coins we pay
for it, i.e., $z$ in the algorithm.

\begin{algorithm}[ht]
    \caption{\normalfont \scshape Bricks}
    \begin{algorithmic}
        \State {\bf Input}: number of bricks $n$;
       number of machines $m$; 
       number of bags $b$; 
       desired robustness factor $\rho$
       \State $c \gets n$ \Comment{The initial number of coins is n}
        \For{$j \gets 1$ to $b$}
            \State $z \gets \lceil c/m\rceil$ 
            \Comment{max guaranteed coins on a machine}
            \State $a_j \gets \lfloor z \cdot \rho \rfloor$ 
            \Comment{max integer such that $\cost(a_j) = z$}
            \State $c \gets c - z$
        \EndFor
        \State \Return $a_1, a_2, \ldots, a_b$
    \end{algorithmic}
    \label{alg-bag-size}
\end{algorithm}

If Algorithm \nameref{alg-bag-size} produces bags of total size at least $n$, we say it is {\em successful}.
If the total sum of bag sizes exceeds $n$, we decrease the sizes of some bags to make the sum equal to $n$.
E.g., we can remove some of the last small bags and then decrease size of the last non-empty bag as needed.

In Section~\ref{sec-bricks-second} we show that this algorithm is
sound, namely, we give a modification of the second stage algorithm
algorithm \nameref{alg-greedy} for which we show that a machine
with unused speed $z$ always exists and thus we can assign all bags.

For a general instance, there is always a machine of speed at least $\lceil n/m\rceil=\lceil \lambda\rceil$, and thus the cost of the first bag is chosen as $\lceil\lambda\rceil$.
The cost will then decrease by 1 every time the number of coins decreases below a multiple of $m$.
Figure~\ref{fig-lambda-bag} illustrates this.

\begin{figure}[h]
    \centering
    \includegraphics{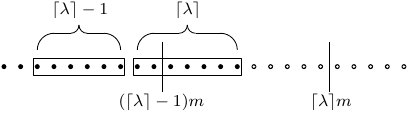}
    \caption{Graphical representation of the first chosen bag of size $\lceil\lambda\rceil$.}
    \label{fig-lambda-bag}
\end{figure}

Note that the costs of the bags chosen by \nameref{alg-bag-size} do
not depend on $\rho$. The sizes of the bags, however, do depend on
$\rho$.  See Figure~\ref{fig_45_9} below for an example execution of
\nameref{alg-bag-size} for $n = 45$ and $m = 9$. This execution shows that
\nameref{alg-bag-size} fails for $\rho < 1.6$ but succeeds for $\rho = 1.6$.

\begin{figure}
    \centering
    \begin{tabular}{c|c|c}
         remaining coins $c$ & $a_j$ for $\rho = 1.6$ & $a_j$ for $\rho < 1.6$ \\
         \hline
         45 & 8 & $<8$ \\
         40 & 8 & $<8$ \\
         35 & 6 & $\le6$ \\
         31 & 6 & $\le6$ \\
         27 & 4 & $\le4$ \\
         24 & 4 & $\le4$ \\
         21 & 4 & $\le4$ \\
         18 & 3 & $\le3$ \\
         16 & 3 & $\le3$ \\
         \hline
          & 46 & $\le44$ \\
    \end{tabular}
\medskip

\centering
        \includegraphics{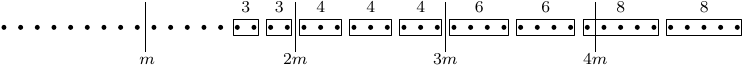}
        \caption{
 Tabular and graphical
 representation of the execution of \nameref{alg-bag-size} for $n = 45$, $m = 9$ and $\rho = 1.6$.
            The numbers above bags represent their sizes.
            The sum of bag sizes is actually $46 > n = 45$, to solve this, we can for example replace one bag of size 3 with a bag of size 2.
        }
        \label{fig_45_9}
\end{figure}

\subsection{Fractional solutions}

In general, the cost of the first bag chosen will be $\lceil\lambda\rceil$.
The cost will then decrease by $1$ every time the number of coins
decreases below a multiple of $m$. Roughly speaking, we use
approximately $m$ coins for bags of each size.

We need to show that the created $b$ bags have total size at least
$n$. If we would use exactly $m/z$ bags for each cost $z$, the total
size of bags is easy to compute. However, the integral number of bags
of each cost causes rounding issues when the bag cost decreases and
these complicate the calculations.

To structure our analysis, we first modify the solution obtained by
\nameref{alg-bag-size} into a solution that uses possibly non-integral
number of bags of each size.  In such a solution, we can use fractions
of bags (such as $\frac{4}{5}$ of a bag of size 8 as in the
Figure~\ref{fig_45_9_fractional}).  We arrange the modification so
that the total size of bags of cost $z$ is exactly $m$, except for the
smallest and largest bag costs.  In the main part of our proof, we
bound the rounding error, i.e., the difference between the sizes of
the integral and fractional solution. To complete the proof, we
calculate the total size of bags in the modified fractional solution,
which is easy, and show that it is well above $n$.

\begin{figure}[ht]
    \centering
    \includegraphics{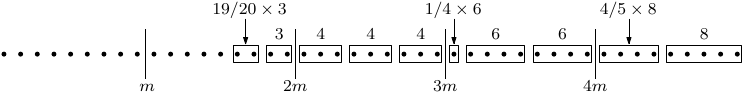}
    \caption{
        Fractional solution for $n = 45$, $m = 9$ and $\rho = 1.6$ produced by \nameref{alg-fractional}.
        Notice that we always use only one bag size (cost) between consecutive multiples of $m$.
        Compare this to Figure~\ref{fig_45_9} where bag cost $5$
        ``overflows'' the line at $4m$ coins.
    }
    \label{fig_45_9_fractional}
\end{figure}

For the fractional solutions, it is better to use an
alternative representation of the bags by a function $F$ that for each
$z$ gives the number $F(z)$ of bags of cost $z$.  The size of $F$ is
then defined as the total size of bags.  Recall that a bag of cost $z$
has size $\lfloor z \cdot \rho \rfloor$. Formally:

\begin{definition}
    A fractional solution is a mapping $F: \mathbb{N} \to \mathbb{R}_0^+$ satisfying $\sum_{z=1}^\infty F(z) = b$.
   The size of fractional solution $F$ for robustness factor $\rho$ is defined as
    \[
        \size(F, \rho) = \sum_{z=1}^\infty F(z) \cdot \lfloor z \cdot \rho \rfloor\,.
    \]
    We will sometimes use only $\size(F)$ if $\rho$ is clear from the context. 
\end{definition}

We start by reformulating \nameref{alg-bag-size} so that it produces
the solution directly in the alternative representation, see
Algorithm \nameref{alg-integral-fractional} below. It is easy to see
that \nameref{alg-bag-size} and \nameref{alg-integral-fractional} are
equivalent.

\begin{algorithm}[ht]
    \caption{\normalfont \scshape BricksAlt}
    \begin{algorithmic}
        \State {\bf Input}: number of bricks $n$;
        number of machines $m$;
       number of bags $b$ 
        
        \State {\bf Output}: Fractional solution $I$
      
        \State $r \gets b$
        \Comment{$r$ is the remaining number of bags (integral)}
        \State $c \gets n$
        \Comment{$c$ is the remaining number of coins (integral)}
        \State $I[z] \gets 0$ for $z \in \mathbb{N}$
        \While{$r > 0$ {\bf and} $c > 0$}
            \State $z \gets \lceil \frac c m \rceil$
            \Comment{$z$ is the bag cost}
            \State $x \gets \min\left(r, \lceil\frac{c - m(z - 1)}{z}\rceil\right)$
            \Comment{$x$ is the (integral) number of bags of cost $z$}
            \State $r \gets r - x$
            \State $c \gets c - x \cdot z$
            \State $I[z] \gets x$
        \EndWhile
        \State \Return $I$
    \end{algorithmic}
    \label{alg-integral-fractional}
\end{algorithm}

\begin{observation}
    \nameref{alg-bag-size} and \nameref{alg-integral-fractional} use each bag cost the same number of times.
\end{observation}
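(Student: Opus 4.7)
The plan is to show that \nameref{alg-integral-fractional} is simply a batched version of \nameref{alg-bag-size}: each iteration of its while loop condenses a maximal run of consecutive iterations of \nameref{alg-bag-size} that share the same cost $z$. Establishing this equivalence immediately implies that the multisets of bag costs produced by the two algorithms agree.

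The key arithmetic fact is the following. Suppose \nameref{alg-bag-size} enters some iteration with coin count $c$, and computes $z=\lceil c/m\rceil$, equivalently $m(z-1)<c\le mz$. If it then performs $k$ further iterations all still assigned cost $z$, the coin count becomes $c-kz$, and the cost of the next iteration is still $z$ iff $c-kz>m(z-1)$, i.e., iff $k<(c-m(z-1))/z$. Writing $q=(c-m(z-1))/z>0$, the maximal number of consecutive cost-$z$ iterations of \nameref{alg-bag-size} (before the cost drops, ignoring early termination by exhausting the $b$ slots) is therefore exactly $\lceil q\rceil$. One checks both cases: if $q\notin\mathbb{Z}$ the largest integer $k$ with $k<q$ is $\lceil q\rceil-1$, so there are $\lceil q\rceil$ iterations counting the initial one; if $q\in\mathbb{Z}$ it is $q-1$, and $\lceil q\rceil=q$ gives the same count. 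This matches precisely the quantity $\lceil(c-m(z-1))/z\rceil$ computed inside \nameref{alg-integral-fractional}.

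I would then finish by induction on the number of iterations of the while loop of \nameref{alg-integral-fractional}, maintaining the invariant that its pair $(c,r)$ coincides with $(c,b-j)$ of \nameref{alg-bag-size} after some initial $j$ iterations, and that the multisets of bag costs produced so far agree. The inductive step is immediate from the lemma, and the $\min(r,\lceil q\rceil)$ in \nameref{alg-integral-fractional} correctly truncates the batch when fewer than $\lceil q\rceil$ bag slots remain. The only mild subtlety is termination: \nameref{alg-bag-size} always performs $b$ iterations, while \nameref{alg-integral-fractional} halts on $r=0$ or $c=0$; if the latter happens with $r>0$, the remaining iterations of \nameref{alg-bag-size} produce bags with $z=\lceil 0/m\rceil=0$, which do not affect the count of bags of any positive cost. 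I expect no real obstacle beyond careful bookkeeping around the ceiling function at the integer boundaries.
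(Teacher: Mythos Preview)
Your proposal is correct and follows essentially the same approach as the paper: both argue that one iteration of \nameref{alg-integral-fractional} batches a maximal run of equal-cost iterations of \nameref{alg-bag-size}, with the key identity being that $\lceil (c-m(z-1))/z\rceil$ counts exactly how many consecutive cost-$z$ steps \nameref{alg-bag-size} performs from coin level $c$. Your version is simply more explicit about the ceiling arithmetic, the induction on $(c,r)$, and the termination edge case with cost-$0$ bags, all of which the paper leaves implicit.
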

\begin{proof}
    One step of \nameref{alg-integral-fractional} corresponds to several steps of \nameref{alg-bag-size}.
    \nameref{alg-bag-size} chooses the bags one by one, and it may choose the same bag cost in several consecutive iterations.
    \nameref{alg-integral-fractional} in each step calculates how many bags of given cost would \nameref{alg-bag-size} use.
    The key observation is that the expression
    $
        \lceil(c - m(z - 1))/z\rceil
    $
    calculates how many bags of cost $z$ are needed to have at most $m(z - 1)$ coins remaining.
    In other words, it calculates how many bags of cost $z$ \nameref{alg-bag-size} uses before it starts using bags of cost $z - 1$ (or runs out of bags).
    Hence both \nameref{alg-bag-size} and
    \nameref{alg-integral-fractional} use the same number of bags of
    cost $z$ for each $z$.
\end{proof}

Algorithm \nameref{alg-fractional} (see below) is obtained from \nameref{alg-integral-fractional} by removing the rounding in the calculation of the number of bags $x$.
\begin{algorithm}[ht]
    \caption{\normalfont \scshape BricksFract}
    \begin{algorithmic}
        \State {\bf Input}: number of bricks $n$;
        number of machines $m$;
       number of bags $b$
        
        \State {\bf Output}: Fractional solution $F$

        \State $r \gets b$
        \Comment{$r$ is the remaining number of bags (fractional)}
        \State $c \gets n$ 
        \Comment{$c$ is the remaining number of coins (fractional)}
        \State $F[z] \gets 0$ for $z \in \mathbb{N}$
        \While{$r > 0$ {\bf and} $c > 0$}
            \State $z \gets \lceil \frac c m \rceil$
            \State $x \gets \min\left(r, \frac{c - m(z - 1)}{z}\right)$
            \Comment{$x$ is the fractional amount of bags of cost $z$}
            \State $r \gets r - x$
            \State $c \gets c - x \cdot z$
            \State $F[z] \gets x$
        \EndWhile
        \State \Return $F$
    \end{algorithmic}
    \label{alg-fractional}
\end{algorithm}

The following observation says that the algorithm follows our initial intuition, namely that for bags of each cost we use exactly $m$ coins, except for the first and last bag cost used.
\begin{definition}
    Let $F$ be a fractional solution, then let $z_{\min}$ and $z_{\max}$ denote the smallest and largest integers such that $F(z_{\min}) > 0$ and $F(z_{\max}) > 0$.
\end{definition}
\begin{observation}
    \label{obs-Fz}
    Let $F$ be a result \nameref{alg-fractional} with input $n$ and $m$.
    Then $F(z) = m/z$
     for every $z$ such that $z_{\min} < z < z_{\max}$.
\end{observation}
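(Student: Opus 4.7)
The plan is to establish an invariant on the coin count across iterations of \nameref{alg-fractional}: at the start of every iteration after the first, provided the loop has not already terminated due to running out of bags, the remaining coin count $c$ is exactly $mz$, where $z = \lceil c/m\rceil$ is the cost chosen in that iteration. Granting this invariant, the number of bags computed in the iteration is $x = \min(r,\,(c - m(z-1))/z) = \min(r,\,m/z)$, which equals $m/z$ whenever $r \ge m/z$, i.e., whenever the iteration is not the final one. Thus $F(z) = m/z$ for every interior iteration, which is exactly the claimed conclusion.

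The base case is the first iteration, which starts with $c = n$ and produces cost $z_{\max} = \lceil n/m\rceil$; it leaves the coin count at $n - x z_{\max} = m(z_{\max} - 1)$, matching the invariant for the next iteration (since $\lceil m(z_{\max}-1)/m\rceil = z_{\max} - 1$). For the inductive step, an iteration entering with $c = mz$ for some integer $z \ge 1$ sees $\lceil c/m\rceil = z$, uses $m/z$ bags of cost $z$ (assuming sufficiently many bags remain), and leaves $mz - (m/z)\cdot z = m(z-1)$ coins, so the invariant is restored with the cost decreased by exactly $1$. Consequently the sequence of costs produced by the algorithm is a strictly decreasing integer progression $z_{\max}, z_{\max}-1, z_{\max}-2, \ldots, z_{\min}$, and every cost $z$ with $z_{\min} < z < z_{\max}$ arises from an interior iteration where $F(z) = m/z$.

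The two excluded endpoints correspond exactly to the two places where the invariant fails to pin down $F$: the cost $z_{\max}$, where the first iteration begins with an arbitrary $c = n$ rather than a multiple of $m$ and hence may use fewer than $m/z_{\max}$ bags; and the cost $z_{\min}$, where the loop may terminate because $r$ drops below $m/z_{\min}$, giving $F(z_{\min}) = r < m/z_{\min}$. The only real subtlety, rather than a genuine obstacle, is lining up these two boundary behaviors with the observation's strict inequalities $z_{\min} < z < z_{\max}$; once the invariant is in place, the body of the argument reduces to the routine identity $(mz - m(z-1))/z = m/z$.
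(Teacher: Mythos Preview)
Your proof is correct and follows essentially the same approach as the paper: both argue that every non-final step leaves the coin count at $c = m(z-1)$, so that each intermediate step starts with $c = mz$ and therefore sets $x = (mz - m(z-1))/z = m/z$. Your version is simply more explicit in framing this as an induction on an invariant and in discussing the two endpoint cases $z_{\max}$ and $z_{\min}$, whereas the paper compresses the argument into three sentences.
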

\begin{proof}
    Observe that in every step of the algorithm, except the last one, it holds that   $x = (c - m(z - 1))/z$
    and thus
    $c - x \cdot z = m(z - 1)$.
    It follows that in all the steps except for the first and last ones 
        $x = (mz - m(z - 1))/z = m/z$.
\end{proof}

Next we observe that the result of \nameref{alg-fractional} scales, i.e., essentially it depends only on $\lambda$. 
Note also that \nameref{alg-fractional} is well defined even for non-integral $m$ and $n$.
\begin{observation}
    Let $\alpha \in \mathbb{R}^+$.
    Suppose \nameref{alg-fractional} produces solution F with $n$ and $m$ as an input and solution $\bar{F}$ with inputs $\alpha n$ and $\alpha m$.
    Then $
        \bar{F}(z) = \alpha F(z)
    $
    for all $z$.
    It follows that
    $
        \size(\bar{F}, \rho) = \alpha \cdot \size(F, \rho)
    $.
    \label{obs-invariant}
\end{observation}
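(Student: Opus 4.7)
The plan is to prove the claim by induction on the iterations of the while loop in Algorithm \nameref{alg-fractional}, running the algorithm in parallel on the two inputs $(n,m,b)$ and $(\alpha n, \alpha m, b)$. The invariant I would maintain is that at the start of each iteration, the remaining coin count $c$ in the scaled execution equals $\alpha$ times the remaining coin count in the unscaled execution, and that the iterations of the two runs correspond one-to-one.

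The key steps driving the induction are three scale-invariance observations. First, the bag cost $z=\lceil c/m\rceil$ is identical in both executions in every iteration, since jointly scaling numerator and denominator by $\alpha$ leaves the ratio $c/m$ unchanged. Second, the quantity $(c-m(z-1))/z$ is linear in $c$ and $m$ with $z$ held fixed, so it scales by exactly $\alpha$ between the two executions; consequently the newly computed $x$ in the scaled run is $\alpha$ times the unscaled $x$, and $F[z]$ is updated accordingly. Third, the update $c\gets c - xz$ preserves the invariant because $xz$ also scales by $\alpha$. Putting these together, the $F[z]$ values accumulated in corresponding iterations satisfy $\bar F(z)=\alpha F(z)$ for each $z$.

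The size identity is then immediate from the definition and linearity of sums:
\[
\size(\bar F,\rho)=\sum_{z=1}^\infty \bar F(z)\lfloor z\rho\rfloor=\alpha\sum_{z=1}^\infty F(z)\lfloor z\rho\rfloor=\alpha\cdot\size(F,\rho).
\]

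The main technical subtlety — and the step I expect to require most care — is the $\min(r,\cdot)$ in the assignment of $x$. Since $b$ (and hence the initial $r$) is not scaled while the other argument of the $\min$ does scale by $\alpha$, an asymmetry could in principle cause the two runs to terminate at different iterations (the scaled run hitting the bag cap sooner when $\alpha>1$). The clean way to handle this is to observe that the iterations in which the second argument of the $\min$ is binding are exactly those in which the loop is about to terminate because $c$ reaches $0$, as witnessed by Observation~\ref{obs-Fz} (all interior iterations use precisely $m$ coins, the first and last use the remainder); thus as long as $b$ is large enough for the loop to exit on the condition $c=0$ in both runs — the regime in which this observation is invoked in the remainder of the paper — the $r$-cap is not binding before termination, and the induction carries through identically.
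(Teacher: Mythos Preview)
Your argument has a gap that stems from misreading the setup. Throughout Section~\ref{sec-bricks} the paper works under the standing assumption $b=m$, so when the inputs $n$ and $m$ are scaled by $\alpha$, the third input $b$ is scaled by $\alpha$ as well. The paper's own proof makes this explicit: it asserts that in every iteration ``$r$ is multiplied by $\alpha$'', and since $r$ is initialized to $b$, this only holds if $b$ itself is scaled. With that reading, the step you flag as the main subtlety disappears entirely: both arguments of $\min(r,\,(c-m(z-1))/z)$ scale by $\alpha$, hence so does the minimum, and the induction is completely straightforward.

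Under your reading (keeping $b$ fixed while scaling $n$ and $m$), the statement is in fact false, not merely delicate. By definition a fractional solution satisfies $\sum_z F(z)=b$; if $\bar F(z)=\alpha F(z)$ for all $z$, then $\sum_z \bar F(z)=\alpha b\neq b$, so $\bar F$ cannot be the output for the same $b$. Your proposed workaround---restricting to the regime where both runs terminate on $c=0$---does not cover the paper's actual uses of the observation: for small $\lambda$ (e.g.\ $\lambda=2$) the loop terminates on $r=0$, and Lemma~\ref{lem-surplus} relies on the scaling for all $\lambda\le 60$. So the fix is not to add a hypothesis but to correct the inputs of the scaled run to $(\alpha n,\alpha m,\alpha m)$; once you do that, your inductive scheme matches the paper's proof and goes through without further work.
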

\begin{proof}
    We go through the execution of \nameref{alg-fractional} step by step.
    Suppose that we multiply both $m$ and $n$ by $\alpha$.
    Then in every iteration of the loop 
 $r$ is multiplied by $\alpha$,
 $c$ is multiplied by $\alpha$,
 $z$ stays the same, and
 $x$ is multiplied by $\alpha$.
\end{proof}

Now we are ready to bound the difference between the solutions
produced by 
\nameref{alg-fractional} and \nameref{alg-integral-fractional}, i.e.,
the rounding error.
\begin{lemma}
    \label{lem-difference}
    Let $F$ be the fractional solution produced by \nameref{alg-fractional} and  $I$ the solution produced by \nameref{alg-integral-fractional} on the same input. Then
    \begin{itemize}
        \item 
            for $\lambda \le 5$ it holds that $\size(I, 1.6) \ge \size(F, 1.6)$ and
        \item 
            for $\lambda \le 60$ it holds that
            $\size(I, 1.6) \ge \size(F, 1.6) - 12$.
    \end{itemize}
\end{lemma}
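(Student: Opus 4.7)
The plan is to prove the lemma by coupling \nameref{alg-fractional} and \nameref{alg-integral-fractional} iteration by iteration. The two algorithms differ only in that \nameref{alg-integral-fractional} rounds up the fractional bag count $(c-m(z-1))/z$ to the next integer. After any iteration at cost $z$, the fractional algorithm leaves exactly $c^F=m(z-1)$ coins remaining by Observation~\ref{obs-Fz}, while the integer algorithm leaves some $c^I\in(m(z-1)-z,\,m(z-1)]$, overshooting by less than $z$ coins. Provided $z\leq m$ this overshoot cannot cause the integer algorithm to skip a cost level, so in the regime we care about both algorithms traverse the same sequence $z_0,z_0-1,\ldots,z_{\min}$ of cost levels in lockstep.

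Using $\lfloor 1.6\,z\rfloor = 1.6\,z - \{1.6\,z\}$ together with $\sum_z F(z) = \sum_z I(z) = b$, $\sum_z F(z)\cdot z = n - u_F$ and $\sum_z I(z)\cdot z = n - u_I$, where $u_F,u_I$ denote the residual coins left by each algorithm at termination, the size difference can be rewritten as
\[
    \size(I,1.6)-\size(F,1.6)
    \;=\; 1.6\,(u_F - u_I)\;-\;\sum_z\bigl(I(z)-F(z)\bigr)\{1.6\,z\}.
\]
The coupling gives $c^I_k\leq c^F_k$ after every iteration $k$, so $u_I\leq u_F$ and the first term is non-negative; the task reduces to upper-bounding the redistribution term. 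The relevant structural input for $\rho=1.6=8/5$ is that $\{1.6\,z\}$ is periodic modulo $5$ and takes the values $0,\,3/5,\,1/5,\,4/5,\,2/5$ for $z\equiv 0,1,2,3,4 \pmod 5$.

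For $\lambda\leq 5$, only at most five distinct cost levels arise, and the claim reduces to a finite case analysis parametrised by $z_0\in\{1,\ldots,5\}$ and the residues of $m$ modulo the cost levels involved. In each case $F(z)$ and $I(z)$ are simple explicit functions of $m$ and $z$, and one verifies directly that the redistribution term is bounded by $1.6(u_F-u_I)$, yielding the tight inequality $\size(I)\geq\size(F)$. For $\lambda\leq 60$ the sum involves up to sixty cost levels; here the per-level bound $|I(z)-F(z)|<1$ and the fact that consecutive discrepancies are tightly correlated through the propagation of the $c$-overshoot allow for an amortised accounting that yields the bound of $12$.

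The main obstacle is obtaining the sharp constant $12$ in the second case. A naive tally using only $|I(z)-F(z)|<1$ and $\{1.6\,z\}\leq 4/5$ would sum to a quantity proportional to the number of cost levels, which is an order of magnitude too large. The sharpening exploits the periodicity of $\{1.6\,z\}$ together with the fact that the coin-overshoot at one iteration directly determines the rounding discrepancy at the next, so that positive and negative per-level contributions largely cancel. A secondary subtlety is the non-generic behaviour of the first and last iterations, where the identity $F(z)=m/z$ fails and the boundary contributions must be bounded separately.
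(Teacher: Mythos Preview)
Your reformulation of $\size(I)-\size(F)$ through fractional parts and the coupling argument are correct, but the proposal stops exactly where the real work begins. You correctly observe that the naive bound on $\sum_z(I(z)-F(z))\{1.6z\}$ is an order of magnitude too large, and then assert that periodicity together with overshoot-propagation yields the constant $12$ via ``amortised accounting''---but no such accounting is carried out. The overshoot recurrence you allude to is real (the overshoot $o_{z-1}$ is determined by $o_z$ and $m\bmod(z-1)$), yet it depends on the residues of $m$ modulo every cost level and exhibits no evident telescoping; you give no mechanism for extracting a uniform bound from it. Likewise, for $\lambda\le5$ you claim a finite case analysis parametrised by $z_0$ and residues of $m$, but the verification that the redistribution term is dominated by $1.6(u_F-u_I)$ is simply asserted, not performed.

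The paper supplies precisely the missing mechanism, via a different route. Rather than comparing $I$ and $F$ level by level, it transforms $F$ into an intermediate $F'$ by processing costs from $\lceil\lambda\rceil$ downward: at cost $z$ it rounds the current value up to an integer, decreases the value at cost $z-1$ so that total \emph{cost} is preserved, and increases the value at cost $1$ so that total \emph{bag count} is preserved. Because both invariants are maintained, the size change in one step factors as $(H(z)-G(z))\cdot f(z)$ with $0\le H(z)-G(z)<1$ and
\[
f(z)=\lfloor z\rho\rfloor-\tfrac{z}{z-1}\lfloor(z-1)\rho\rfloor+\tfrac{1}{z-1}\lfloor\rho\rfloor,
\]
which depends only on $z$ and $\rho$. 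One then tabulates $f(z)$: for $z\le5$ every value is non-negative (giving the first bullet directly), and the sum of the negative values for $z\le60$ exceeds $-12$. Finally $\size(I)\ge\size(F')$ because $F'$ agrees with $I$ above $z_{\min}$ and trades some of $I$'s remaining bags for cost-$1$ bags. The three-way redistribution (to $z-1$ and to $1$, preserving both totals) is the non-obvious device that turns your hoped-for cancellation into a one-line per-step bound; without something equivalent, your approach does not reach either constant.
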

\begin{proof}
    We give an algorithm which transforms  $F$ into a solution $F'$ that is almost integral and very close to $I$. Set $\bar{z}$ to be $z_{\min}$ of the solution $F$ and note that $z_{\max} = \lceil\lambda\rceil$.
    
   We go through the bag costs, denoted by $z$, from $\lceil\lambda\rceil$ down to $\bar{z}+1$. For each $z$, 
    if $F$ uses non-integral amount of bags of cost $z$, round it up. This makes the number of bags of cost $z$ equal to their number in the solution $I$. Then decrease the number of bags of cost $z-1$ so that the total cost of all bags remains the same. Finally, increase the number of bags of cost 1 so that the total number of bags stays equal to $b$. See Figure~\ref{fig_45_9_transformed} for an illustration.

\begin{figure}[h]
    \centering
    \includegraphics{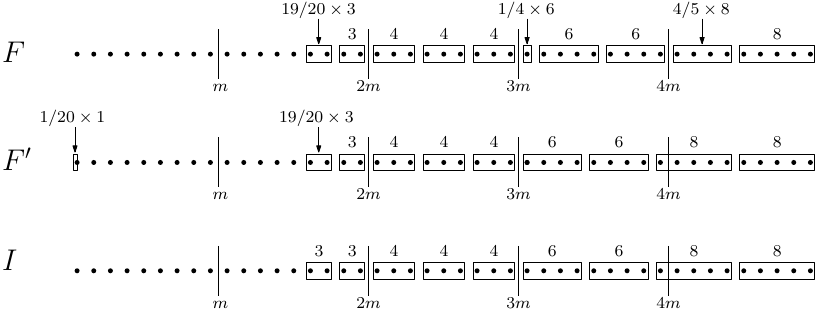}
    \caption{Graphical representation of $F$, $F'$ and $I$.  In the
      first step of the transformation from $F$ to $F'$, $\frac{9}{5}$
      is rounded up to 2 and the number of bags of cost 5 (and
      size 8) increases by $\frac 1 5$.  In order to keep the total
      cost the same, number of bags of cost 4 (and size 6) is decreased
      by $\frac 1 4$.  As a result, total number of bags decreased by $\frac 1
      4 - \frac 1 5 = \frac{1}{20}$, hence we add $\frac{1}{20}$ of a
      bag of cost 1 (and size 1).  This is actually the only step in
      which something happens since number of used bags of cost $4$
      and $3$ is already integral. We do not process the bags of cost
      2, as $\bar{z}=2$.  Solution $F'$ is almost identical
      to the solution $I$, but has $\frac{1}{20}$ of bag of cost $1$
      instead of $\frac{1}{20}$ of bag of cost 2.  }
    \label{fig_45_9_transformed}
\end{figure}

For $z=\bar{z}+1$, the previous procedure could lead to a negative
value of $F'(\bar{z})$. In this special case we proceed slightly
differently and instead of rounding $G(z)$ up we only increase it so that
$F'(\bar{z})=0$.

We now describe one step of the process formally and analyze it.  Let
$G$ denote the current fractional solution and let $H$ denote the
result of one transformation step.  Let $z$ be the current cost of bags.
    
We set $H(z) = \lceil G(z) \rceil$, note that $H(z)-G(z)<1$.  We want
the sum of costs of bags of costs $z-1$ and $z$ to remain the same,
hence we want
    \begin{equation}      
      H(z)\cdot z +
        H(z-1)\cdot (z-1) = 
        G(z)\cdot z +
        G(z-1)\cdot(z-1)
\label{eq-brick-cost}
    \end{equation}
    to hold. Rearranging (\ref{eq-brick-cost}) to an equivalent equation
    leads to (\ref{eq-brick-H}), so we set
    \begin{equation}
         H(z-1) = G(z-1) + (G(z) - H(z)) \cdot \frac{z}{z-1}\,.
\label{eq-brick-H}
    \end{equation}
We claim that $H(z-1)>0$ for $z-1>\bar{z}$. Indeed, as $m>144$ and $z\leq \lceil\lambda\rceil\leq 60$ in the considered case, we have $G(z-1)=F(z-1)=m/(z-1)>2$ (using also $\bar{z}=z_{\min}<z-1<z_{\max}$). As $H(z)-G(z)<1$, we get $H(z-1)>G(z-1)-1>0$.

 Now we describe the modification in the special case when $H(z-1)$
 would become negative. We have shown above that this can happen only
 for $z=\bar{z}+1$, i.e., in the last step. Then we set $H(z-1)=0$ and set
\[
H(z)=G(z)+(G(z-1)-H(z-1))\frac{z-1}{z}.
\]
This equation is equivalent to (\ref{eq-brick-cost}), which is in turn
equivalent to (\ref{eq-brick-H}), which thus also holds. Furthermore,
the fact that the previous procedure would lead to negative $H(z-1)$
implies that now we have $G(z)\leq H(z)\leq \lceil G(z) \rceil$ and
thus $H(z)-G(z)<1$ holds again.

In both cases, the total number of bags has decreased by
    \[
        (G(z) - H(z)) + (G(z-1) - H(z-1)) = \frac{1}{z - 1}(H(z) - G(z))\,.
    \]  
    Thus we set
    \[
        H(1) = G(1) + \frac{1}{z - 1}(H(z) - G(z))\,.
        \]
Note that in the transformation step, both the total number of bags
and their total cost remain constant.

    Recall that the size of a bag of cost $z$ is $\lfloor z\rho\rfloor$.
    It follows that
    \begin{align*}
        \size&(H) - \size(G) \\
       & =(H(z) - G(z)) \cdot \lfloor z\rho\rfloor
        +(H(z-1) - G(z-1)) \cdot \lfloor (z-1)\rho\rfloor
        +(H(1) - G(1)) \cdot \lfloor \rho\rfloor\\
       & =(H(z) - G(z)) \cdot \left(\lfloor z\rho\rfloor - \frac{z}{z-1}\lfloor (z-1)\rho\rfloor + \frac{1}{z - 1}\lfloor\rho\rfloor\right)
    \end{align*}
    Note that the second factor in the expression above does not depend on the solution. We call it the transformation factor and for $z$ we denote it by 
    \[
        f(z) = \left(\lfloor z\rho\rfloor - \frac{z}{z-1}\lfloor (z-1)\rho\rfloor + \frac{1}{z - 1}\lfloor\rho\rfloor\right)\,.
    \]  
    If $f(z) \geq 0$, the size of the solution could have only
    increased, as $H(z)\geq G(z)$, i.e., we have $\size(H)\geq
    \size(G)$.  If $f(z) < 0$, the size of the solution might have
    decreased---those are the important (``bad'') cases we need to
    bound.  We have $H(z) - G(z) < 1$, hence
    $\size(H)\ge\size(G) + f(z)$ in case of negative $f(z)$.

    Now we sum these bounds over all steps for $z$ from $\lceil\lambda\rceil$ to $2$ and get
    \[
        \size(F') - \size(F) \ge \sum_{z=2}^{\lceil\lambda\rceil} \min (0, f(z))
    \]  
    We give a list of values of $f(z)$ for $z$ from 2 to $60$ and $\rho=1.6$
    in Appendix \ref{apx-f}.
    For $z\leq 5$ the values $f(z)$ are non-negative, thus for $\lambda\leq 5$ we get 
    $\size(F') - \size(F) \ge 0$. It can be verified that the sum of all negative values of $f(z)$ for $z\leq 60$ is larger than $-12$
    and thus for $\lambda\leq 60$ we get
    $\size(F') - \size(F) > -12$.
    
Examining the algorithms \nameref{alg-integral-fractional} and
\nameref{alg-fractional} that generate the solutions $I$ and $F$,
respectively, and the transformation process above shows that the
solution $F$ is step by step transformed towards $I$.  In particular,
$I(z)=F'(z)$ for all values $z\geq \bar{z}$ (if the special case does
not apply) or ($z\geq \bar{z}+1$ if the special case applies). For the
small values of $z$, the only possible difference is that solution
$F'$ might have some amount of bags of size 1 instead of some larger
bags in solution $I$. (Note that the total number of bags does not
change during the transformation.)  This implies $\size(I)
\ge\size(F')$ and the lemma follows.
\end{proof}

To complete the proof we need to show that $\size(F)$ is sufficiently
large so that $\size(I)\geq n$. Actually, as the previous
transformation possibly gives $I$ with a slightly smaller size than
$F$, we need to compensate for this difference which is at most
$12$. Precisely, we need to prove that $\size(F, 1.6)\geq n$ for
$\lambda \leq 5$ and $\size(F, 1.6)\geq n+12$ for
$5<\lambda \leq 60$ and $m\geq 144$.

Since the fractional solution $F$ scales when $m$ and $n$ are scaled,
see Observation~\ref{obs-invariant}, it is convenient to normalize by $m$ and
consider $(\size(F, 1.6)-n)/m$ in the following lemma. Let us call
this crucial quantity \textit{normalized brick surplus}, as it
measures how many bricks we are able to put in the bags in the fractional
solution in addition to $n$ bricks, normalized by $m$.

\begin{lemma}
    \label{lem-surplus}
    Let $F$ be a fractional solution produced by \nameref{alg-fractional}. Then
    \begin{itemize}
        \item 
            For $\lambda \le 4$ it holds that 
            $\size(F, 1.6) \ge n$.
       \item 
            For $4 \le \lambda \le 60$ it holds that
            $\size(F, 1.6) \ge n + \frac{1}{12}m$.
    \end{itemize}
\end{lemma}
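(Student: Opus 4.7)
By Observation~\ref{obs-invariant} the quantity $(\size(F,1.6)-n)/m$ depends only on $\lambda=n/m$, so the plan is to normalize to $m=1$, $b=1$ and treat $\lambda$ as a continuous parameter. In this normalization, the structure of \nameref{alg-fractional} is rigid: the first iteration sets $z^*=\lceil\lambda\rceil$ and chooses $F(z^*)=(\lambda-z^*+1)/z^*$, which drops $c$ to exactly $z^*-1$; every subsequent iteration with cost $z\in\{z^*-1,z^*-2,\ldots\}$ consumes exactly one coin and $1/z$ units of bag budget; the process stops at the first cost $z_{\min}$ for which the full share $1/z_{\min}$ no longer fits into the remaining budget, producing a partial $\eta=F(z_{\min})\in(0,1/z_{\min}]$ satisfying
\[
  F(z^*)+\sum_{z=z_{\min}+1}^{z^*-1}\frac{1}{z}+\eta=1.
\]

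With this picture I write the normalized size as
\[
  g(\lambda):=\size(F,1.6)=\frac{\lambda-z^*+1}{z^*}\lfloor 1.6\,z^*\rfloor+\sum_{z=z_{\min}+1}^{z^*-1}\frac{\lfloor 1.6\,z\rfloor}{z}+\eta\,\lfloor 1.6\,z_{\min}\rfloor.
\]
Inside any maximal interval of $\lambda$ on which both $z^*$ and $z_{\min}$ are constant, the quantities $F(z^*)$ and $\eta$ are affine functions of $\lambda$, so $g(\lambda)-\lambda$ is affine there. Hence its minimum is attained at an endpoint of the interval, and these endpoints are explicit: either integer values of $\lambda$ (where $z^*$ changes) or the unique values at which $\eta$ reaches $0$ or $1/z_{\min}$ (where $z_{\min}$ changes by~$1$). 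The verification therefore reduces to checking a finite list of explicit values of $g(\lambda)-\lambda$.

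For $\lambda\le 4$ I do this by hand. Direct computation shows $g(\lambda)=\lambda$ identically on $(0,3]$ (all breakpoints yield surplus exactly $0$), and on $(3,4]$ the surplus $g(\lambda)-\lambda$ rises linearly from $0$ at $\lambda=3^+$ to $1/6$ at $\lambda=11/3$ (the point where $z_{\min}$ jumps from $1$ to $2$) and then falls linearly to $1/12$ at $\lambda=4$. This gives $g(\lambda)\ge\lambda$ on the entire range. For $4\le\lambda\le 60$ the same strategy applies on each of the ranges $(z^*-1,z^*]$ with $z^*\in\{4,5,\ldots,60\}$; the bound is tight at $\lambda\in\{4,5,6\}$ (direct calculation yields surplus exactly $1/12$), and one must verify that the minimum of $g(\lambda)-\lambda$ on each sub-interval stays at least $1/12$.

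The main obstacle is that although each individual piece is trivial, the total number of sub-intervals up to $\lambda=60$ is in the hundreds, and because the bound $1/12$ is tight at several points no coarse estimate works. The cleanest completion is a finite computer enumeration of the breakpoints and direct evaluation of $g$ there, consistent with the computer verification already invoked for the companion small-$m$ case in Appendix~\ref{sec-bricks-computer}. An analytic alternative is to prove that for $\lambda\ge 7$ the endpoint surplus strictly exceeds $1/12$, using a lower bound for $\sum_{z=z_{\min}+1}^{z^*-1}(\lfloor 1.6z\rfloor-z)/z$ derived from the length-$5$ periodicity of $\{1.6z\}$; this would reduce the case analysis to $\lambda\in\{4,5,6\}$, handled by direct calculation.
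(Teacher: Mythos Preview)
Your proposal is correct and follows essentially the same route as the paper: both arguments use Observation~\ref{obs-invariant} to reduce to a function of $\lambda$ alone, establish that the normalized surplus $(\size(F,1.6)-n)/m$ is piecewise affine with breakpoints exactly where $z_{\max}=\lceil\lambda\rceil$ or $z_{\min}$ changes, and then verify the bound at all breakpoints by a finite (computer-assisted) enumeration---the paper's tables in Appendix~\ref{apx-brick-surplus} are precisely this enumeration. Your explicit closed form for $g(\lambda)$ and the hand calculation for $\lambda\le 4$ make the mechanism more transparent than the paper's presentation, and your suggested analytic shortcut via the period-$5$ behaviour of $\{1.6z\}$ is an idea the paper does not pursue; one minor correction is that the number of sub-intervals up to $\lambda=60$ is roughly $80$ (about $60$ integer breakpoints plus about $22$ points where $z_{\min}$ jumps), not ``in the hundreds.''
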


\begin{proof}
By Observation~\ref{obs-invariant}, the normalized brick surplus
$\size(F, 1.6) - n)/m$ is uniquely determined by $\lambda$, i.e.,
multiplying both $n$ and $m$ by the same constant does not change it.
    
This means that the normalized brick surplus is a function of
$\lambda$.  Furthermore, we claim that the function is piece-wise
linear.  Suppose we slowly increase $\lambda$, for example fix $m$ and
increase $n$ by $\delta$. Then $F(z)$ remains constant for all $z$
except $z_{\min}$ and $z_{\max}$ by Observation \ref{obs-Fz}.  The
number of the largest bags $F(z_{\max})$ increases by
$\delta/\lceil\lambda\rceil$ and $F(z_{\min})$ decreases by the same
amount; this amount is proportional to $\delta$. The function
$\size(F)$ is linear in the values of $F(z)$. So the normalized brick
surplus is piece-wise linear with possible breakpoints between the segments
at the values of $\lambda$ when one of the values of $z_{\min}$ or
$z_{\max}$ changes. 

The value of $z_{\max}$ changes exactly when $\lambda$ is an integer.
The breakpoints where $z_{\min}$ increases can be calculated in the
following way: Execute \nameref{alg-fractional} for all integer values
of $\lambda \le 60$.  Let us denote one of such solutions $F$.  Take a
look at $F(z_{\min})$, if we now slowly increase $\lambda$,
$F(z_{\min})$ will decrease linearly as described above.  Calculate at
which point it reaches 0; if it happens before $\lambda$ increases
above another integer, we found a point where $z_{\min}$ changes.  The first
case of changing $z_{\min}$ is at $\lambda = \frac{11}{3}$ when we
stop using bags of cost 1, see Figure~\ref{fig-11_3}.

\begin{figure}[htb]
    \centering
    \includegraphics{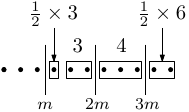}
   \caption{
        Example of solution produced by \nameref{alg-fractional} for $n = 11$, $m = 3$ and $\rho = 1.6$.
        Size of this solution is $11.5$ and normalized brick surplus is $\frac 1 6$.
        The solution does not use any bags of cost 1.
        However, if $\lambda$ were smaller,
    the solution would use bags of size $1$.
       }
   \label{fig-11_3}
\end{figure}

The computer-generated tables of values of the normalized brick
surplus function are given in Appendix~\ref{apx-brick-surplus}. The
plot of the values is given in Figures~\ref{fig-small-lambda}
and~\ref{fig-large-lambda} below.

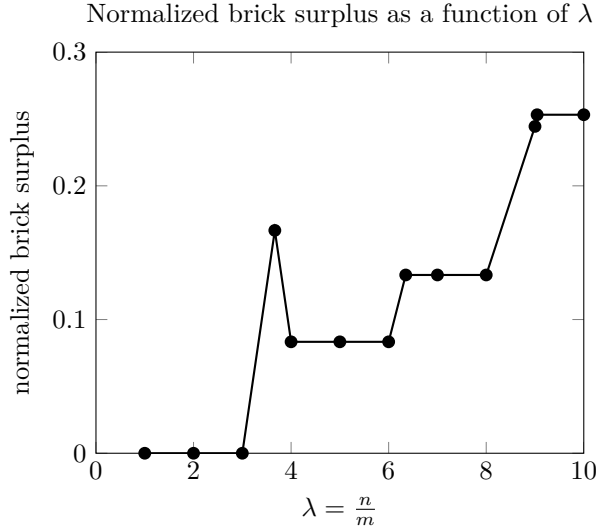
\begin{figure}[ht]
    \centering
    \begin{tikzpicture}
        \begin{axis}[
            title={Normalized brick surplus as a function of $\lambda$},
            xlabel={$\lambda = \frac n m$},
            ylabel={normalized brick surplus},
            xmin=0, xmax=10,
            ymin=0, ymax=0.3,
            legend pos=north west,
        ]
        
            \addplot[
                color=black,
                mark=*,
                style={line width=0.8pt}
            ]
            coordinates {
                (1.0, 0.0)
                (2.0, 0.0)
                (3.0, 0.0)
                (3.6666666666666665, 0.16666666666666666)
                (4.0, 0.08333333333333333)
                (5.0, 0.08333333333333333)
                (6.0, 0.08333333333333333)
                (6.35, 0.13333333333333333)
                (7.0, 0.13333333333333334)
                (8.0, 0.13333333333333333)
                (9.0, 0.24444444444444444)
                (9.043650793650794, 0.25317460317460316)
                (10.0, 0.25317460317460316)
            };
        \end{axis}
        \end{tikzpicture}
    \caption{Plot of normalized brick surplus for small $\lambda$.}
    \label{fig-small-lambda}
\end{figure}

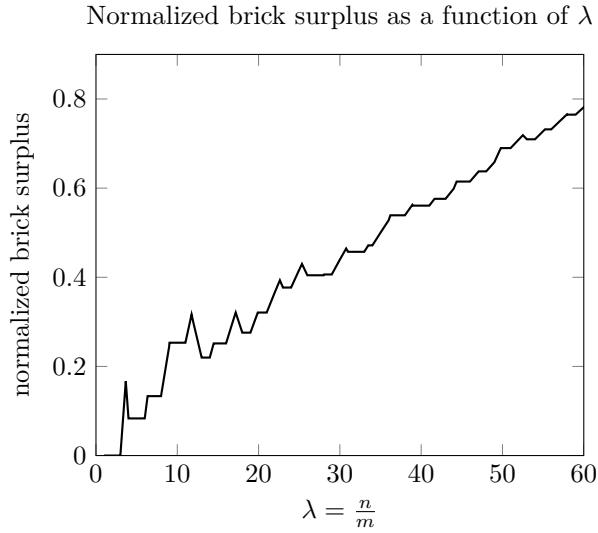
\begin{figure}[!ht]
    \centering
    \begin{tikzpicture}
        \begin{axis}[
            title={Normalized brick surplus as a function of $\lambda$},
            xlabel={$\lambda = \frac n m$},
            ylabel={normalized brick surplus},
            xmin=0, xmax=60,
            ymin=0, ymax=0.9,
            legend pos=north west,
        ]
        
            \addplot[
                color=black,
                style={line width=0.8pt}
            ]
            coordinates {
                (1.0, 0.0)
                (2.0, 0.0)
                (3.0, 0.0)
                (3.6666666666666665, 0.16666666666666666)
                (4.0, 0.08333333333333333)
                (5.0, 0.08333333333333333)
                (6.0, 0.08333333333333333)
                (6.35, 0.13333333333333333)
                (7.0, 0.13333333333333333)
                (8.0, 0.13333333333333333)
                (9.0, 0.24444444444444444)
                (9.043650793650794, 0.25317460317460316)
                (10.0, 0.25317460317460316)
                (11.0, 0.25317460317460316)
                (11.761471861471861, 0.31663059163059165)
                (12.0, 0.29675324675324677)
                (13.0, 0.21983016983016984)
                (14.0, 0.21983016983016984)
                (14.476565101565102, 0.2516011766011766)
                (15.0, 0.2516011766011766)
                (16.0, 0.2516011766011766)
                (17.0, 0.3104247060129413)
                (17.188054592466358, 0.3208721833721834)
                (18.0, 0.27576410517586986)
                (19.0, 0.27576410517586986)
                (19.902349714269217, 0.3208815908893308)
                (20.0, 0.3208815908893308)
                (21.0, 0.3208815908893308)
                (22.0, 0.3663361363438763)
                (22.622009530716962, 0.39338002898374413)
                (23.0, 0.3769456607540468)
                (24.0, 0.3769456607540468)
                (25.0, 0.4169456607540468)
                (25.338261981583425, 0.4299557369687939)
                (26.0, 0.40450427472200257)
                (27.0, 0.40450427472200257)
                (28.0, 0.40450427472200257)
                (28.05211923505469, 0.4063014897238884)
                (29.0, 0.4063014897238884)
                (30.0, 0.4396348230572218)
                (30.771596632665567, 0.4645250370141756)
                (31.0, 0.45715718645500036)
                (32.0, 0.45715718645500036)
                (33.0, 0.45715718645500036)
                (33.49002338401834, 0.47156963892612813)
                (34.0, 0.47156963892612813)
                (35.0, 0.5001410674975567)
                (36.0, 0.5279188452753345)
                (36.20625865854729, 0.5390679619535665)
                (37.0, 0.5390679619535665)
                (38.0, 0.5390679619535665)
                (38.922752217412075, 0.5627282752205428)
                (39.0, 0.5607475628464933)
                (40.0, 0.5607475628464933)
                (41.0, 0.5607475628464933)
                (41.64241983638748, 0.5760432732366716)
                (42.0, 0.5760432732366716)
                (43.0, 0.5760432732366716)
                (44.0, 0.5987705459639443)
                (44.3601394954952, 0.6147767457637309)
                (45.0, 0.6147767457637309)
                (46.0, 0.6147767457637309)
                (47.0, 0.6360533415084118)
                (47.07625668367867, 0.6376420224183841)
                (48.0, 0.6376420224183841)
                (49.0, 0.6580501856836902)
                (49.79513699334442, 0.6898556654174671)
                (50.0, 0.6898556654174671)
                (51.0, 0.6898556654174671)
                (52.0, 0.7090864346482363)
                (52.513872441650335, 0.718782141094469)
                (53.0, 0.7096099230123999)
                (54.0, 0.7096099230123999)
                (55.0, 0.7277917411942181)
                (55.23113693218527, 0.7319191864118122)
                (56.0, 0.7319191864118122)
                (57.0, 0.749463046060935)
                (57.94803843916181, 0.7658085363913111)
                (58.0, 0.7649126474113424)
                (59.0, 0.7649126474113424)
                (60.0, 0.781579314078009)
                (60.66751307319777, 0.7925221513435462)
            };
        \end{axis}
        \end{tikzpicture}
    \caption{Plot of normalized brick surplus for large $\lambda$.}
    \label{fig-large-lambda}
\end{figure}

The lemma now follows, since the normalized brick surplus is always non-negative and it is at least $1/12$ for $\lambda \ge 5$.
(Note that it is a constant function equal to $1/12$ for $\lambda\in [4,6]$.)
\end{proof}

\begin{theorem}
\label{thm-bricks-alg}
For $\rho = 1.6$ and $\lambda \le 60$, Algorithm \nameref{alg-bag-size} always succeeds, i.e., outputs bags of total size at least $n$. 
\end{theorem}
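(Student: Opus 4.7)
The plan is to prove the theorem by directly combining the two preceding lemmas (the difference lemma and the surplus lemma), and handling the small-$m$ regime by computer-verified enumeration. I would split on both the ratio $\lambda$ and on whether $m$ is large enough that the fractional analysis produces a sufficient cushion.

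The first range to dispatch is $\lambda \le 5$. Here the difference lemma gives the clean bound $\size(I,1.6) \ge \size(F,1.6)$, and the surplus lemma guarantees $\size(F,1.6) \ge n$ (in the subrange $\lambda \le 4$ directly, and in $4 \le \lambda \le 5$ via the stronger bound $\size(F,1.6) \ge n + m/12 \ge n$). Chaining these inequalities immediately yields $\size(I,1.6) \ge n$, so \nameref{alg-bag-size} succeeds.

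The main range is $5 \le \lambda \le 60$, and here the key observation is that the two lemmas are exactly calibrated to each other when $m$ is large. Indeed, the difference lemma loses up to $12$ bricks when passing from $F$ to $I$, while the surplus lemma provides a slack of $m/12$. Therefore, if $m \ge 144$, the surplus $m/12 \ge 12$ precisely compensates the loss, and $\size(I,1.6) \ge \size(F,1.6) - 12 \ge n + m/12 - 12 \ge n$, as desired. The threshold $m \ge 144$ here is not an accident but the intrinsic crossover point at which the fractional analysis becomes self-sufficient.

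The remaining case is $5 \le \lambda \le 60$ together with $m \le 143$. Since $n = \lambda m \le 60 \cdot 143$, this is a finite (and modest) collection of pairs $(n,m)$, and for each one can simply simulate \nameref{alg-bag-size} with $\rho = 1.6$ and verify that the total bag size reaches at least $n$; this verification is carried out by the computer-generated table referenced in Appendix~\ref{sec-bricks-computer}. I expect no real mathematical obstacle in the proof given the two lemmas; the only subtle point is matching the constant $12$ in the difference lemma against the constant $1/12$ in the surplus lemma to pin down the threshold $m \ge 144$ and thereby cleanly separate the analytic regime from the (finite) computer-verified regime.
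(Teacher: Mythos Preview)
Your proposal is correct and follows essentially the same approach as the paper: the case split into $\lambda\le 5$, then $5\le\lambda\le 60$ with $m\ge 144$, and finally the finite computer check for the remaining small-$m$ instances is exactly how the paper proves the theorem. You are in fact slightly more careful than the paper's write-up in that you explicitly note that the subrange $4\le\lambda\le 5$ uses the second (stronger) bound of the surplus lemma rather than the first, whereas the paper just says ``the first claims in Lemmata~\ref{lem-surplus} and~\ref{lem-difference}'' and leaves this minor case distinction implicit.
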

\begin{proof}
For $\lambda\leq 5$, the first claims in Lemmata~\ref{lem-surplus} and~\ref{lem-difference} together prove 
$\size(I, 1.6) \ge \size(F,1.6)\ge n$.

For  $5 \le \lambda \le 60$ and $m\geq 144$  the second claims in Lemmata~\ref{lem-surplus} and~\ref{lem-difference} together prove 
$\size(I, 1.6) \ge \size(F,1.6)-12\geq n + m/12 - 12 \ge n +  144/12 - 12 = n$.

For $\lambda\leq 60$ and $m\leq 144$ there are only a finitely many instances and we verify $\size(I,1.6)\geq n$ for them by computer, see Appendix~\ref{apx-small-m}
\end{proof}

We note that our choice of the bounds in the previous two lemmata is
somewhat arbitrary. The plots of the normalized brick surplus suggest
that we could bound it by an appropriate linear function instead of a
constant. Also, the bound on $\size(F)-\size(I)$ can be made smaller
for intermediate values of $\lambda$. These changes would decrease the
number of cases we need to check by a computer program, but would not
improve the robustness factor.

\subsection{Second stage}
\label{sec-bricks-second}

We need to show that if \nameref{alg-bag-size} succeeds, in the second stage we can indeed achieve makespan $\rho$. To do this, we cannot use
Algorithm \nameref{alg-greedy} and Theorem~\ref{thm-greedy}. Instead we modify it to Algorithm \nameref{alg-integral} below, which copies the coins accounting scheme from 
\nameref{alg-bag-size} and thus follows the intuition behind it.

\begin{algorithm}[ht]
    \caption{\normalfont \scshape IntegralAssignment}
    \begin{algorithmic}
        \State {\bf Input}: bag sizes $a_1 \ge \cdots \ge a_b$;
        machine speeds $s_1\ge \cdots \ge s_m$;
        desired robustness factor $\rho$       
        \For{$i \gets 1$ to $m$}
           \State 
            $c_i \gets s_i$
            \Comment{ Machine $i$ gets $s_i$ coins at the beginning.}
           \State 
            $M_i=\emptyset$
             \Comment{Initialize the assignment}
         \EndFor
        \For{$k \gets 1$ to $b$}
            \State $i \gets $ index of the machine with the largest $c_i$
            \State $M_i \gets M_i\cup \{k\}$
            \Comment{Assign bag $k$ to machine $i$}
            \State $c_i \gets c_i - \lceil a_k/\rho\rceil$
            \Comment{Machine $i$ pays for the bag $k$}
        \EndFor
        \State \Return $M_1, \ldots, M_m$
    \end{algorithmic}
    \label{alg-integral}
\end{algorithm}

\begin{theorem}
    \label{thm-success-brick}
    Suppose the first-stage algorithm \nameref{alg-bag-size} succeeds, i.e., outputs bags of total size of at least $n$. Then \nameref{alg-integral} in the second stage produces an assignment with makespan at most $\rho$.
\end{theorem}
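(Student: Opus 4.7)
The plan is to verify the key invariant that links the two stages: the cost $z_k = \lceil c/m\rceil$ computed by \nameref{alg-bag-size} when producing bag $k$ is an upper bound on the integer cost $\lceil a_k/\rho\rceil$ charged by \nameref{alg-integral} for the same bag, and consequently the pigeonhole argument that makes \nameref{alg-bag-size} feasible also makes \nameref{alg-integral} feasible.

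First, I would let $z_k$ denote the cost computed in the $k$-th iteration of \nameref{alg-bag-size} and $C_k = n - \sum_{j<k} z_j$ be the remaining coin count there just before iteration $k$, so that $z_k = \lceil C_k/m\rceil$ and $a_k = \lfloor z_k\rho\rfloor \le z_k\rho$. Since $z_k$ is an integer, this gives the key inequality $\lceil a_k/\rho\rceil \le z_k$. Any post-processing reduction of bag sizes (to make the total equal to exactly $n$) only sharpens this inequality.

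Next I would analyse \nameref{alg-integral}. By the preliminary assumption $\sum_{i=1}^m s_i = n$ and the integrality of the $s_i$, the coin counters $c_i$ stay non-negative integers as long as each payment is feasible. Let $p_k = n - \sum_{j<k} \lceil a_j/\rho\rceil$ denote the total remaining budget of \nameref{alg-integral} just before bag $k$ is assigned. Summing the bound $\lceil a_j/\rho\rceil \le z_j$ gives the main invariant
\[
p_k \;=\; n - \sum_{j<k} \lceil a_j/\rho\rceil \;\ge\; n - \sum_{j<k} z_j \;=\; C_k.
\]
By the integer pigeonhole principle, the maximum counter satisfies $\max_i c_i \ge \lceil p_k/m\rceil \ge \lceil C_k/m\rceil = z_k \ge \lceil a_k/\rho\rceil$, so bag $k$ can always be paid for and no counter ever becomes negative.

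Finally, I would translate the coin invariant into a load bound per machine: each coin spent on machine $i$ covers exactly one unit of $\lceil a_k/\rho\rceil$, so the total spent there is at most the initial budget $s_i$. Using $a_k \le \rho\lceil a_k/\rho\rceil$ and summing over $k\in M_i$ yields $\sum_{k\in M_i} a_k \le \rho\, s_i$, i.e.\ completion time at most $\rho$ on every machine, hence makespan at most $\rho$. The main obstacle is the comparison $p_k \ge C_k$ between the two algorithms; once this coupling is set up, both the pigeonhole feasibility step in the second stage and the final load bound are immediate, and the integrality of the $s_i$ (inherited from the preliminary assumptions) is precisely what makes the pigeonhole step quantitatively strong enough.
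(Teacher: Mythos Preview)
Your proof is correct and follows essentially the same approach as the paper: both arguments establish the coupling $\sum_i c_i \ge c$ (your $p_k \ge C_k$) via the key inequality $\lceil a_k/\rho\rceil \le z_k$, then apply the integer pigeonhole principle to guarantee a machine with $c_i \ge z_k$, and finally convert non-negativity of the counters into the per-machine load bound $\sum_{k\in M_i} a_k \le \rho s_i$. Your presentation is slightly more explicit in the final step (writing out $a_k \le \rho\lceil a_k/\rho\rceil$) and in handling the post-processing reduction of bag sizes, but the structure and ideas are identical.
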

\begin{proof}
    Imagine that \nameref{alg-bag-size} and \nameref{alg-integral} are running in parallel.
    \nameref{alg-bag-size} chooses the size of one bag and \nameref{alg-integral} assigns it to a machine. Note that the values of $c_i$ remain integral during the entire execution.
    
    We claim that during the execution the value $c$ in \nameref{alg-bag-size} is at most $\sum_{i=1}^mc_i$ for $c_i$'s in \nameref{alg-integral}. 
    At the beginning, the quantities are equal.
    Suppose that  \nameref{alg-bag-size} creates a bag of cost $z$ and thus decreases $c$ by $z$. Then the bag has size $a=\lfloor z\cdot \rho\rfloor\leq z\rho$. Thus \nameref{alg-integral} decreases $c_i$ by 
    $\lceil a/\rho\rceil\leq \lceil z \rho/\rho \rceil=z$. Thus the sum of $c_i$'s decreases by at most $z$ and the claim follows.

The claim implies that in each step before creating/assigning a bag of cost $z$, there exists a machine with $c_i\geq z$ in \nameref{alg-integral}.
Indeed, \nameref{alg-bag-size} chooses $z=\lceil c/m \rceil$, thus $m(z-1)<c\leq \sum_{i=1}^mc_i$ using the previous claim. Hence there exists a machine with $c_i>z-1$ and together with integrality of $c_i$ we get $c_i\geq z$.

It follows that $c_i$'s remain non-negative during the execution.
Thus \nameref{alg-integral} assigned to machine $i$ bags of the total size at most $s_i\cdot \rho$. It follows that the makespan is at most $\rho$.
\end{proof}

Theorems~\ref{thm-bricks}, \ref{thm-bricks-alg}, and \ref{thm-success-brick} immediately imply our main result.
\begin{theorem}\label{thm-success}
    There exists 1.6-robust algorithm for the case of bricks and $b=m$.
    \qed
\end{theorem}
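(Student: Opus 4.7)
The plan is to combine the three main ingredients of this section by a case analysis on the parameter $\lambda = n/m$. The composite algorithm will apply one of two first-stage procedures depending on whether $\lambda$ is large or small, and in both regimes the makespan bound $\rho = 1.6$ will be obtained from results already proved above.

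For the large-$\lambda$ case, specifically $\lambda \ge 60$, I would argue that the instance is $q$-pebbles with $q = 1/\lambda \le 1/60$, since each brick has processing time $1$ and the average load is $\lambda$. Invoking Theorem~\ref{thm-bricks} with $b = m$, Algorithm \nameref{alg-pebbles} is $(\overline{\rho}(m,m) + m/n)$-robust. Using $\overline{\rho}(m,m) \le e/(e-1) \approx 1.582$ and $m/n = 1/\lambda \le 1/60$, the sum is strictly less than $1.6$, so running \nameref{alg-pebbles} with target $\rho = 1.6$ produces a valid $1.6$-robust schedule.

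For the small-$\lambda$ case, $\lambda < 60$, I would use \nameref{alg-bag-size} with $\rho = 1.6$ in the first stage. Theorem~\ref{thm-bricks-alg} guarantees that \nameref{alg-bag-size} succeeds on every such input, i.e., produces bags whose total size is at least $n$, so after the minor trimming described in Section~\ref{sec-bricks-first} all $n$ bricks fit into the $b = m$ bags. Then in the second stage I would apply \nameref{alg-integral} with the same $\rho$. By Theorem~\ref{thm-success-brick}, this assigns the bags to machines so that the makespan is at most $\rho = 1.6$.

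Combining the two cases yields the claimed $1.6$-robust algorithm, so the proof is just a dispatch on $\lambda$ followed by citation of the three prior theorems. There is no substantive obstacle; the only thing to check is that the boundary $\lambda = 60$ is covered by at least one of the two cases (it is, by the large-$\lambda$ branch), and that $b = m$ is used consistently throughout, which it is, since the bound $\overline{\rho}(m,m) \le e/(e-1)$ used in the first case and the entire construction in Section~\ref{sec-bricks-first} both assume $b = m$.
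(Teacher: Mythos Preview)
Your proposal is correct and matches the paper's own argument essentially line for line: the paper's proof of Theorem~\ref{thm-success} is the one-sentence observation that Theorems~\ref{thm-bricks}, \ref{thm-bricks-alg}, and~\ref{thm-success-brick} together imply the result, with the case split at $\lambda=60$ spelled out earlier in the section exactly as you describe. Your treatment of the boundary case and of the inequality $e/(e-1)+1/60<1.6$ is also the same as the paper's.
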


\section*{Conclusions}

Our main result still leaves a small gap in the bounds for bricks
(equal-length jobs) and $b=m$ between the lower bound of
$e/(e-1)\approx 1.58$ and our upper bound of $1.6$. Our algorithm
\nameref{alg-bag-size} does not admit a smaller robustness factor than
$1.6$, as is shown for $n = 45$ and $m = 9$ in
Figure~\ref{fig_45_9}. So a smaller upper bound would need some
additional techniques or special handling of some cases.  Eberle
\textit{et al.}~\cite{eberle} give an example that shows a
lower bound for bricks that is larger than $\overline{\rho}(m, m)$ for
$m=6$. Although the value of the bound is below the limit value
$e/(e-1)$, this may be taken as a weak evidence that matching the
lower bound may be hard.

The main open problem in this model remains to find a 
$(2-\varepsilon)$-robust algorithm for the general case and $b=m$. 

\subparagraph*{Acknowledgments}
We are grateful to Franziska Eberle for insightful discussions and
  comments, and to the anonymous referees for detailed and useful reviews.

\bibliographystyle{abbrv}
\bibliography{arxiv-speed-robust}

\appendix

\clearpage

\section{Computer-assisted proofs for bricks}
\label{sec-bricks-computer}

\subsection{Values of the transformation factor}
\label{apx-f}

Let us evaluate
\[
    f(z) = \left(\lfloor z\rho\rfloor - \frac{z}{z-1}\lfloor (z-1)\rho\rfloor + \frac{1}{z - 1}\lfloor\rho\rfloor\right)
\]
for $\rho = 1.6$ and $z \in \{2, 3, \dots, 60\}$.
See Figure~\ref{fig-table} for both exact and approximate values of $f(z)$.
You might notice that the signs of $f(z)$ are periodic (with the
exception of first few values) with a period of 5. As a result,
approximately $40 \%$ of the values of $f(z)$ are negative and the
table shows that they are all above $-0.6$. Examining and summing
these $22$ values leads to the bound of $12$ on $\size(F)-\size(I)$.

\begin{figure}[h]
\centering
\begin{tabular}{|c|r|r|c|c|r|r|c|c|r|r|}
\cline{1-3}
\cline{5-7}
\cline{9-11}
$z$ & $f(z)$ & $\approx f(z)$ & 
\hspace{3ex}
&$z$ & $f(z)$ & $\approx f(z)$ & 
\hspace{3ex}
&$z$ & $f(z)$ & $\approx f(z)$ \\
\cline{1-3}
\cline{5-7}
\cline{9-11}
&&&&&&&&&&\\
&&&&21 &$\mathbf{-\frac{11}{20}}$&\bf{--0.55}&&41 &$\mathbf{-\frac{23}{40}}$&\bf{--0.57}\\[1ex]
2 &$2$&2.00&&22 &$\frac{10}{21}$&0.48&&42 &$\frac{18}{41}$&0.44\\[1ex]
3 &$0$&0.00&&23 &$\mathbf{-\frac{6}{11}}$&\bf{--0.55}&&43 &$\mathbf{-\frac{4}{7}}$&\bf{--0.57}\\[1ex]
4 &$1$&1.00&&24 &$\frac{11}{23}$&0.48&&44 &$\frac{19}{43}$&0.44\\[1ex]
5 &$\frac{3}{4}$&0.75&&25 &$\frac{11}{24}$&0.46&&45 &$\frac{19}{44}$&0.43\\[1ex]
6 &$\mathbf{-\frac{2}{5}}$&\bf{-0.40}&&26 &$\mathbf{-\frac{14}{25}}$&\bf{--0.56}&&46 &$\mathbf{-\frac{26}{45}}$&\bf{--0.58}\\[1ex]
7 &$\frac{2}{3}$&0.67&&27 &$\frac{6}{13}$&0.46&&47 &$\frac{10}{23}$&0.43\\[1ex]
8 &$\mathbf{-\frac{3}{7}}$&\bf{-0.43}&&28 &$\mathbf{-\frac{5}{9}}$&\bf{--0.56}&&48 &$\mathbf{-\frac{27}{47}}$&\bf{--0.57}\\[1ex]
9 &$\frac{5}{8}$&0.62&&29 &$\frac{13}{28}$&0.46&&49 &$\frac{7}{16}$&0.44\\[1ex]
10 &$\frac{5}{9}$&0.56&&30 &$\frac{13}{29}$&0.45&&50 &$\frac{3}{7}$&0.43\\[1ex]
11 &$\mathbf{-\frac{1}{2}}$&\bf{-0.50}&&31 &$\mathbf{-\frac{17}{30}}$&\bf{--0.57}&&51 &$\mathbf{-\frac{29}{50}}$&\bf{--0.58}\\[1ex]
12 &$\frac{6}{11}$&0.55&&32 &$\frac{14}{31}$&0.45&&52 &$\frac{22}{51}$&0.43\\[1ex]
13 &$\mathbf{-\frac{1}{2}}$&\bf{-0.50}&&33 &$\mathbf{-\frac{9}{16}}$&\bf{--0.56}&&53 &$\mathbf{-\frac{15}{26}}$&\bf{--0.58}\\[1ex]
14 &$\frac{7}{13}$&0.54&&34 &$\frac{5}{11}$&0.45&&54 &$\frac{23}{53}$&0.43\\[1ex]
15 &$\frac{1}{2}$&0.50&&35 &$\frac{15}{34}$&0.44&&55 &$\frac{23}{54}$&0.43\\[1ex]
16 &$\mathbf{-\frac{8}{15}}$&\bf{-0.53}&&36 &$\mathbf{-\frac{4}{7}}$&\bf{--0.57}&&56 &$\mathbf{-\frac{32}{55}}$&\bf{--0.58}\\[1ex]
17 &$\frac{1}{2}$&0.50&&37 &$\frac{4}{9}$&0.44&&57 &$\frac{3}{7}$&0.43\\[1ex]
18 &$\mathbf{-\frac{9}{17}}$&\bf{-0.53}&&38 &$\mathbf{-\frac{21}{37}}$&\bf{--0.57}&&58 &$\mathbf{-\frac{11}{19}}$&\bf{--0.58}\\[1ex]
19 &$\frac{1}{2}$&0.50&&39 &$\frac{17}{38}$&0.45&&59 &$\frac{25}{58}$&0.43\\[1ex]
20 &$\frac{9}{19}$&0.47&&40 &$\frac{17}{39}$&0.44&&60 &$\frac{25}{59}$&0.42\\[1ex]
\cline{1-3}
\cline{5-7}
\cline{9-11}
\end{tabular}
\caption{Table with exact and approximate (rounded) values of $f(z)$ for $\rho = 1.6$ and $z \in \{2, 3, \dots, 60\}.$}
\label{fig-table}
\end{figure}

\clearpage

\subsection{Brick Surplus}
\label{apx-brick-surplus}

See Figure~\ref{fig-surplus-integer} for values of normalized brick
surplus at integer values of $\lambda$ and
Figure~\ref{fig-surplus-float} for values at non-integer breakpoint
values of $\lambda$.
Note that the normalized brick surplus sometimes decreases.

\begin{figure}[h]
    \centering
    \begin{tabular}{|c|c|c|c|c|c|c|c|}
        \cline{1-2}
        \cline{4-5}
        \cline{7-8}
        $\lambda$ & surplus & \hspace{1.0cm} &$\lambda$ & surplus & \hspace{1.0cm} &$\lambda$ & surplus\\
        \cline{1-2}
        \cline{4-5}
        \cline{7-8}
        1 & 0.000 && 21 & 0.321 && 41 & 0.561\\
        2 & 0.000 && 22 & 0.366 && 42 & 0.576\\
        3 & 0.000 && 23 & 0.377 && 43 & 0.576\\
        4 & 0.083 && 24 & 0.377 && 44 & 0.599\\
        5 & 0.083 && 25 & 0.417 && 45 & 0.615\\
        6 & 0.083 && 26 & 0.405 && 46 & 0.615\\
        7 & 0.133 && 27 & 0.405 && 47 & 0.636\\
        8 & 0.133 && 28 & 0.405 && 48 & 0.638\\
        9 & 0.244 && 29 & 0.406 && 49 & 0.658\\
        10 & 0.253 && 30 & 0.440 && 50 & 0.690\\
        11 & 0.253 && 31 & 0.457 && 51 & 0.690\\
        12 & 0.297 && 32 & 0.457 && 52 & 0.709\\
        13 & 0.220 && 33 & 0.457 && 53 & 0.710\\
        14 & 0.220 && 34 & 0.472 && 54 & 0.710\\
        15 & 0.252 && 35 & 0.500 && 55 & 0.728\\
        16 & 0.252 && 36 & 0.528 && 56 & 0.732\\
        17 & 0.310 && 37 & 0.539 && 57 & 0.749\\
        18 & 0.276 && 38 & 0.539 && 58 & 0.765\\
        19 & 0.276 && 39 & 0.561 && 59 & 0.765\\
        20 & 0.321 && 40 & 0.561 && 60 & 0.782\\
        \cline{1-2}
        \cline{4-5}
        \cline{7-8}
    \end{tabular}
    \caption{Table of approximate values of the normalized brick surplus for integer values of $\lambda$.}
    \label{fig-surplus-integer}
\end{figure}

\begin{figure}[!h]
    \centering
    \begin{tabular}{|c|c|c|c|c|c|c|}
        \cline{1-3}
        \cline{5-7}
        bag cost & $\lambda$ & surplus & \hspace{1cm} &
        bag cost & $\lambda$ & surplus \\
        \cline{1-3}
        \cline{5-7}
        1 & 3.667 & 0.167 &&
        12 & 33.490 & 0.472 \\
        2 & 6.350 & 0.133 &&
        13 & 36.206 & 0.539 \\
        3 & 9.044 & 0.253 &&
        14 & 38.923 & 0.563 \\
        4 & 11.761 & 0.317 &&
        15 & 41.642 & 0.576 \\
        5 & 14.477 & 0.252 &&
        16 & 44.360 & 0.615 \\
        6 & 17.188 & 0.321 &&
        17 & 47.076 & 0.638 \\
        7 & 19.902 & 0.321 &&
        18 & 49.795 & 0.690 \\
        8 & 22.622 & 0.393 &&
        19 & 52.514 & 0.719 \\
        9 & 25.338 & 0.430 &&
        20 & 55.231 & 0.732 \\
        10 & 28.052 & 0.406 &&
        21 & 57.948 & 0.766 \\
        11 & 30.772 & 0.465 &&
        22 & 60.668 & 0.793 \\
        \cline{1-3}
        \cline{5-7}
    \end{tabular}
    \caption{Table of approximate values of the normalized brick surplus at the points when we stop using bags of given cost.}
    \label{fig-surplus-float}
\end{figure}

\clearpage

\subsection{Small cases for bricks}
\label{apx-small-m}

Here we give a computer program that verifies that algorithm~\nameref{alg-bag-size} produces bags of total size at least $n$ for all instances with  $m \le 144$ and $\frac n m \le 60$.

We provide an implementation in \texttt{Python 3.11}.
We represent numbers as fractions to avoid issues with numerical precision.
We implement the equivalent Algorithm \nameref{alg-integral-fractional}, since it is faster than Algorithm \nameref{alg-bag-size}.
The code should take at most a few minutes to run on modern hardware.

\begin{code}
from fractions import Fraction
from math import ceil, floor

def bricks_alt(
    n: int,
    m: int,
) -> dict[int, Fraction]:
    """
    n: number of jobs
    m: number of machines

    returns: Solution in format {bag_cost: number_of_bags}
    """
    c = n
    r = m
    I = {}
    while r > 0 and c > 0:
        z = ceil(Fraction(c, m))
        x = min(r, ceil(Fraction(c - m * (z - 1), z)))
        c -= x * z
        r -= x
        I[z] = x

    return I

rho = Fraction(8, 5)
for m in range(1, 145):
    for n in range(1, 60 * m + 1):
        I = bricks_alt(n, m)
        solution_size = sum(
            floor(bag_cost * rho) * bag_count
            for bag_cost, bag_count in I.items()
        )
        if solution_size < n:
            raise Exception(
                f"Failed for: {n=} {m=} {rho=} \n Solution: {I}"
            )
\end{code}

\end{document}